\newtheorem{theorem}{{\bf Theorem}}
\newtheorem{conjecture}{{\bf Conjecture}}
\newtheorem{lemma}{{\bf Lemma}}
\DeclareMathOperator*{\argmax}{arg\,max}
\begin{document}
%
\title{Utility Maximization in Wireless Backhaul Networks with Service Guarantees}
\author{Nicholas Jones and Eytan Modiano
\thanks{The authors are with the Laboratory for Information and Decision Systems (LIDS), Massachusetts Institute of Technology, Cambridge, MA 02139. \ 
email: \{jonesn, modiano\}@mit.edu}}

\IEEEaftertitletext{\vspace{-0.6\baselineskip}}
\maketitle

\begin{abstract}
    We consider the problem of maximizing utility in wireless backhaul networks, where utility is a function of satisfied service level agreements (SLAs), defined in terms of end-to-end packet delays and instantaneous throughput. We model backhaul networks as a tree topology and show that SLAs can be satisfied by constructing link schedules with bounded inter-scheduling times, an NP-complete problem known as pinwheel scheduling. For symmetric tree topologies, we show that simple round-robin schedules can be optimal under certain conditions. In the general case, we develop a mixed-integer program that optimizes over the set of admission decisions and pinwheel schedules. We develop a novel pinwheel scheduling algorithm, which significantly expands the set of schedules that can be found in polynomial time over the state of the art. Using conditions from this algorithm, we develop a scalable, distributed approach to solve the utility-maximization problem, with complexity that is linear in the depth of the tree.
\end{abstract}

\maketitle

\section{Introduction}

Next-generation wireless networks must support increasing levels of real-time control and inference as these technologies become ubiquitous. In many such systems, the controller or learning model is located at a centralized node or edge server, and the system relies on network communication to transport data from distributed devices to the server in real time. Best-effort service is often insufficient for these systems, which demand hard guarantees on throughput and latency to operate safely and reliably.

In addition to hard service guarantees, these systems require ever-increasing data rates, which necessitate an ever-increasing power budget. Power is already a significant operating expense in 5G, and low-power technologies are a main focus of 6G development as consumption is expected to rise even further. At the same time, the cost of hardware deployment is rapidly dropping, incentivizing infrastructure providers to install more and more access points (APs) and small cells with wireless backhaul links. This creates a natural tree topology, with devices connected to APs at the leaves, and the server located at the root.

In this work we consider networks of this topology and uplink traffic with hard service requirements, dictated by service-level agremeents (SLAs) between the network operator (NO) and customers. We focus on uplink traffic because the data being sent to the model is typically much larger than the model output being returned to the device, but the same analysis holds for downlink traffic. We assume that the NO receives some utility (revenue) from satisfying each SLA, which is a function of the service requirements it must satisfy, and we derive admission control and offline scheduling policies to maximize the NO's utility, while guaranteeing each SLA is satisfied. In particular, our policies are able to make hard guarantees on throughput and end-to-end packet delays. While we assume a wireless topology subject to interference, our results can be generalized to any TDMA backhaul network.

There is a considerable body of work on network utility maximization. The problem was first formulated by Kelly in~\cite{kelly1998rate}, who proposed allocating rates to users to maximize utility in the network. A distributed solution to this problem was developed in~\cite{low1999optimization}. It has been studied extensively in the context of stochastic network optimization, using Lyapunov-based approaches to maximize utility while ensuring network stability~\cite{neely2010stochastic}. A summary of cross-layer techniques for wireless networks can be found in~\cite{georgiadis2006resource}.

There has also been work on utility maximization with service guarantees. Hou and Kumar introduced a framework for meeting service requirements in single-hop wireless networks~\cite{hou_theory_2009} and showed that this can be used to maximize utility in~\cite{hou2010utility}, where utility is determined by the fraction of user traffic which is served before its deadline. This framework was extended to multi-hop networks in~\cite{li2012scheduling}.

Additionally, there is a growing body of work on scheduling in wireless backhaul networks. In~\cite{kabbani_distributed_2007}, a throughput-maximizing scheduling policy is developed for wireless backhaul, and in~\cite{gopalam2022distributed} a backpressure based policy is proposed with the same objective. In~\cite{narlikar2010designing}, a scheduling policy is designed which achieves at most twice the delay of an equivalent wired backhaul by multiplexing over both time and frequency. An online learning algorithm is used in~\cite{han2018online} to minimize average wireless backhaul delay, while a cross-layer routing, scheduling, and resource allocation design is explored in~\cite{cao2007multi}. 

Offline scheduling for throughput maximization in general wireless networks has also received considerable attention~\cite{jain2003impact,hajek_link_1988,kodialam2003characterizing}. When deadline constraints are considered, optimal schedules can be found by relaxing interference and capacity constraints~\cite{singh2018throughput,singh2021adaptive}. Without these relaxations, the authors of~\cite{djukic2007quality,djukic_delay_2009} show that packet delays can be bounded as a function of the schedule length by solving a mixed-integer program or using heuristic approaches.

Recently, a scheduling framework was proposed to make hard throughput and delay guarantees in wireless networks with interference constraints and general topologies~\cite{jones_optimal_2024}. Using a network slicing model, the authors show that tight deadline guarantees can be made by bounding inter-scheduling times for each link on a packet's route. Constructing schedules with bounded inter-scheduling times is known as pinwheel scheduling and was first introduced in~\cite{holte1989pinwheel}. It is known to be NP-complete~\cite{mok_algorithms_1989}, but polynomial time algorithms exist which can form schedules under certain sets of conditions~\cite{chan_general_1992,chan_schedulers_1993,han_distance_1996,lin_pinwheel_1997}. 

In this work, we build on the network slicing model and inter-scheduling conditions of~\cite{jones_optimal_2024} to make simultaneous throughput and delay guarantees for flows in a wireless backhaul network. We introduce a novel state of the art algorithm called Inductive Scheduling ($IS$) that significantly expands the set of pinwheel schedules that can be found in polynomial time. We derive feasibility conditions for pinwheel schedules based on this algorithm, and optimize over these scheduling conditions and flow admission decisions to efficiently maximize utility for a batch of flows, while guaranteeing that each flow's SLA is satisfied.

The remainder of this paper is organized as follows. In Section~\ref{sec:sysmodel}, we introduce the system model, summarize the existing results on which our work is based, and formally define our objective. In Section~\ref{sec:symmetricarrivals}, we show that when the tree is perfectly symmetric, simple round-robin schedules can be optimal, but a more complex policy based on pinwheel scheduling is required when asymmetry exists. In Section~\ref{sec:inductivescheduling}, we introduce the $IS$ algorithm and show that it constructs schedules in polynomial time which were previously unattainable. In Section~\ref{sec:generalarrivals}, we show that conditions from the $IS$ algorithm can be used to solve our utility maximization problem, formulated as a mixed-integer program, and we present a scalable distributed algorithm for doing so. In Section~\ref{sec:simulations}, we provide numerical results for the $IS$ algorithm and our utility maximization framework.

\section{System Model}\label{sec:sysmodel}

We consider a wireless network with a tree topology, denoted by $G=(V,E)$, where $V$ is the set of nodes (vertices on the graph), and $E$ is the set of links (edges on the graph). The tree consists of $D$ levels, with the server (root node) at level $0$ and the leaves (APs) at level $D-1$. Denote the set of level $d$ nodes as $V_d$, and the level of node $v$ as $d(v)$. Each node $v$ has $N_v$ children, and denote the set of its children as $\mathcal{C}_v$, the set of all its descendants as $V_v^-$, and the set of its ancestors as $V_v^+$. We consider uplink traffic so each node has one outgoing directional link connecting it to its parent. Because it is unique, we denote the outgoing link from node $v$ as $e(v)$.

Time is slotted, with the duration of one time slot equal to the transmission time of a single packet over a link of unit capacity. Physically, a time slot can be thought of as the smallest schedulable interval as dictated by the physical and link layers, and a unit of capacity as the capacity needed to send one packet over this time interval. Each link $e \in E$ has a fixed capacity $c_e$, which can vary between links. Links are error-free but subject to local interference, where each set of children that share a parent node mutually interfere, so only one child can be scheduled in each slot. Denote the set of links scheduled at time $t$ as $\mu(t)$, and let $\mu_e(t) = 1$ if link $e$ is scheduled at time $t$ and $0$ otherwise. Denote the maximum inter-scheduling time of link $e$, i.e., the largest continuous time interval between slots where link $e$ is scheduled, as $k_e$. Let $\boldsymbol{k}$ be the vector of $k_e$ values for all $e \in E$, and $\boldsymbol{k_v}$ be the vector of $k_{e(v')}$, for all $v' \in \mathcal{C}_v$. We use a superscript $\pi$ to denote these quantites under a given scheduling policy $\pi$.

Network traffic takes the form of flows, where each flow corresponds to a customer connected wirelessly to an AP at level $D-1$. While connected, these devices form level $D$ of the tree. We assume batch arrivals, where a batch of flows arrives simultaneously, and the network makes admission decisions and computes an offline schedule for the entire batch. Before establishing a connection, each device establishes an SLA with the NO, dictating the level of service it requires. These SLAs are defined in terms of a maximum instantaneous arrival rate $\lambda$, a packet deadline $\tau$, and an SLA duration $T$. For ease of exposition, we assume in this work that SLAs are equivalent for all flows. This can easily be extended to a flow-specific $\lambda$ and $\tau$ values, which we will address in a future version of this work.

An SLA initialized at time $t_0$ dictates that if $\lambda_i(t) \leq \lambda$ packets belonging to $f_i$ are generated at time $t$, each packet must be delivered to the server by time $t+\tau$, for every $t_0 \leq t < t_0 + T$. We assume the NO cannot accept an SLA unless it is guaranteed to be met. When a flow's SLA is satisfied, the NO receives some utility as a function of $\lambda$ and $\tau$. Because SLAs are symmetric, maximizing utility is equivalent to maximizing the number of supported SLAs, which we denote by $\sigma$. Note that this framework can also support flows which require guaranteed rates without hard deadlines, by setting $\tau$ to be large.

Denote flow $f_i$'s route, from the customer at level $D$ to the server at the root, as $\mathcal{T}^{(i)}$, and the set of all leaf-to-root paths as $\mathcal{T}$. Each link $e \in \mathcal{T}^{(i)}$ is assigned a slice of capacity, which is reserved specifically for flow $f_i$. We denote the amount of capacity reserved for $f_i$, or the slice width, as $w_{i,e}$. In addition, each slice has its own dedicated queue, which decouples the queueing of different flows. We assume all scheduling policies are work-conserving, so when a link $e$ is scheduled, it serves the smaller of the current queue size and the slice width of each flow.

\subsection{Preliminary Results}

In~\cite{jones_optimal_2024}, the authors show that for any wireless topology and interference constraints, and using the same network slicing model described above, deadline guarantees can be made by satisfying conditions on maximum inter-scheduling times $\boldsymbol{k}$. Specifically, they show the following result for a given set of flows $\mathcal{F}$.
\begin{lemma}[\!\cite{jones_optimal_2024}]\label{lemma:suffconds}
    If a scheduling policy exists which satisfies SLAs for all flows in $\mathcal{F}$, then there must exist an SLA-satisfying cyclic policy $\pi$ with period $K^{\pi}$, such that 
    \begin{equation}
        \mu^{\pi}(t) = \mu^{\pi}(t+K^{\pi}), \ \forall \ t_0 \leq t < T - K^{\pi}.
    \end{equation}
    Furthermore, if slice widths are set such that
    \begin{equation}\label{eq:slicewidthcond}
        w_{i,e} = \lambda k_e^{\pi}, \ \forall \ f_i \in \mathcal{F}, e \in \mathcal{T}^{(i)},
    \end{equation}
    then SLAs are satisfied for all flows $\mathcal{F}$ if 
    \begin{align}
        \sum_{e \in \mathcal{T}^{(i)}} k_e^{\pi} \leq \tau, \ \forall \ f_i \in \ &\mathcal{F},\label{eq:deadlinecond} \\ 
        \sum_{f_i \in \mathcal{F} : e \in \mathcal{T}^{(i)}} \lambda k_e^{\pi} \leq c_e, \ \forall \ &e \in E.\label{eq:slicecapcond}
    \end{align}
\end{lemma}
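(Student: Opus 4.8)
The plan is to establish the two assertions separately. For the first, I would use a finite-state argument. Put the network under a worst-case, adversarial arrival stream that injects at rate $\lambda$ into every admitted flow; since the given policy meets every deadline, no packet ever waits more than $\tau$ slots, so each slice backlog is bounded by $\lambda\tau$ and the joint vector of slice backlogs, together with the finitely many admissible interference-feasible schedules, takes values in a finite set $\mathcal X$. Restricting (without loss of feasibility under a fixed worst-case input) to policies that act as a deterministic function of this state, the state trajectory from the empty network is generated by a deterministic self-map of $\mathcal X$ and is therefore eventually periodic with some period $K^{\pi}\le|\mathcal X|$; replaying the steady-state portion of the schedule yields a cyclic policy that still meets all deadlines, with $\mu^{\pi}(t)=\mu^{\pi}(t+K^{\pi})$ throughout the arrival window once $T$ is large enough to contain the (at most $|\mathcal X|$-slot) transient.

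For the second assertion I would argue one flow $f_i$ at a time and use a min-plus / network-calculus decomposition of its route $\mathcal T^{(i)}=(e_1,\dots,e_m)$. The capacity condition \eqref{eq:slicecapcond} says that at every link $e$ the slice widths sum to at most $c_e$, so whenever $e$ is scheduled it can serve, from every slice at once and without contention, the smaller of that slice's backlog and its width; hence flow $f_i$'s slice at hop $e_j$ behaves as a private server that clears up to $w_{i,e_j}=\lambda k_{e_j}^{\pi}$ of its own backlog each time $e_j\in\mu^{\pi}(t)$. Because $k_{e_j}^{\pi}$ is the maximum inter-scheduling time, $e_j$ is scheduled at least $\lfloor\Delta/k_{e_j}^{\pi}\rfloor$ times in any $\Delta$ consecutive slots, so this private server delivers at least $\lambda(\Delta-k_{e_j}^{\pi})$ packets over any backlogged window of length $\Delta$, i.e.\ it offers the rate-latency service curve $\beta_j(\Delta)=\lambda(\Delta-k_{e_j}^{\pi})^{+}$.

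Since the exogenous stream of $f_i$ obeys the arrival curve $\alpha(\Delta)=\lambda\Delta$ (at most $\lambda$ per slot, empty at $t_0$), concatenating the $m$ hops gives the end-to-end service curve $\beta=\beta_1\ast\cdots\ast\beta_m$, which for equal-rate rate-latency curves is $\beta(\Delta)=\lambda\big(\Delta-\sum_{e\in\mathcal T^{(i)}}k_e^{\pi}\big)^{+}$. The end-to-end delay of any packet is then at most the horizontal deviation between $\alpha$ and $\beta$, namely $\sum_{e\in\mathcal T^{(i)}}k_e^{\pi}$, which is $\le\tau$ by \eqref{eq:deadlinecond}; doing this for every flow shows all SLAs are met. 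The convolution step is essential here: the ``pay bursts only once'' effect is what makes the route delay collapse to $\sum_e k_e^{\pi}$, whereas a naive sum of per-hop delay bounds would be far larger than $\tau$.

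I expect the first assertion to be the main obstacle. Making the finite-state argument rigorous requires pinning down a single arrival pattern that dominates all admissible ones for the worst-case delay, justifying the reduction to state-feedback policies, and --- the fiddliest point --- handling the transient so that the extracted periodic schedule, run from the empty network at $t_0$, still meets every deadline rather than merely doing so ``in steady state.'' The second assertion is comparatively routine; its only wrinkle is that $\lambda$ and the $w_{i,e}$ may be fractional while packets are integral, which is absorbed by carrying floor functions through the same bounds at the cost of $O(1)$ slack already covered by the integer inter-scheduling times.
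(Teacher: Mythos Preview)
This lemma is not proved in the present paper: it is quoted verbatim as a preliminary result from~\cite{jones_optimal_2024}, and the surrounding text explicitly defers to that reference for the argument (``We encourage readers to reference~\cite{jones_optimal_2024} for more details''). So there is no in-paper proof to compare your proposal against.

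That said, your sketch is a reasonable reconstruction of the kind of argument one expects in the cited work. The second assertion---the sufficient conditions---is essentially a network-calculus reading of the slicing model, and your rate-latency/service-curve decomposition with the ``pay bursts only once'' concatenation is the standard and correct way to obtain the additive delay bound $\sum_{e\in\mathcal{T}^{(i)}} k_e^{\pi}$; the paper's one-line gloss (``forcing links to be scheduled somewhat regularly and queue sizes to remain small'') is consistent with exactly this. For the first assertion, your finite-state/eventual-periodicity argument is also the natural route; your own caveats about fixing a dominating arrival pattern, justifying the restriction to state-feedback policies, and handling the transient from the empty initial condition are the right places where work is needed, but none of them is a conceptual obstacle. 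Since the paper gives no proof here, I cannot say whether your approach matches or diverges from the original in~\cite{jones_optimal_2024}.
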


This result shows that we can restrict ourselves to cyclic policies without loss of optimality, and shows how to construct an SLA-satisfying policy in terms of slice widths and inter-scheduling times. The deadline constraint~\eqref{eq:deadlinecond} combined with the slice width condition~\eqref{eq:slicewidthcond} guarantees that all deadlines are met by forcing links to be scheduled somewhat regularly and queue sizes to remain small, while the slice capacity condition~\eqref{eq:slicecapcond} ensures that a link has enough capacity to support the allocated slices. We encourage readers to reference~\cite{jones_optimal_2024} for more details.

Define the class of cyclic, work-conserving scheduling policies that satisfy our local interference constraints as $\Pi_c$. In order for queues to remain stable and for any finite deadline to be met as the SLA duration $T$ goes to infinity, the service rate of each queue must be at least as large as the arrival rate~\cite{jones_optimal_2024}. Defining the time-average activation rate of link $e$ under policy $\pi$ as
\begin{equation}
    \bar{\mu}_e^{\pi} \triangleq \frac{1}{K^{\pi}} \sum_{t=0}^{K^{\pi}-1} \mu_e^{\pi}(t),
\end{equation}
the queue stability condition can be written as
\begin{equation}\label{eq:queuestability}
    w_{i,e} \bar{\mu}_e^{\pi} \geq \lambda.
\end{equation}
If the inter-scheduling time of link $e$ is always the same under a policy $\pi$, i.e., it is scheduled \textit{exactly} every $k_e^{\pi}$ slots, then $k_e^{\pi} = 1/\bar{\mu}_e^{\pi}$, and the slice width condition~\eqref{eq:slicewidthcond} represents the smallest stable slice width under $\pi$. The larger $k_e$ is relative to $1/\bar{\mu}_e$, the larger the slice width must be and the larger the minimum feasible $\tau$ becomes. This shows that schedules with regular inter-scheduling times are more efficient at satisfying SLAs. We will see this in more detail in the next section.



\section{Utility Maximization on Symmetric Trees}\label{sec:symmetricarrivals}


Recall that our objective is to maximize the number of supported SLAs $\sigma$ over all admission decisions and scheduling policies. We begin by analyzing this problem under symmetry conditions. Assume the tree $G$ is perfectly symmetric at each level, such that each node $v$ at level $d$ has $N_{d+1}$ children, and $N_D$ flows arrive at each AP. Further assume that capacities are fixed across each level of the tree, with $c_{e(v)} = c_d$ for each node $v$ at level $d$. To help analyze the problem in this setting, it is instructive to consider the opposite perspective and to define the feasible set of $(\lambda, \tau)$ pairs, such that \textit{all} SLAs can be supported by some policy $\pi \in \Pi_c$ on $G$. Denote this region as $\Lambda(G)$, and define the throughput-optimal rate
\begin{equation}
    \lambda^*(G) \triangleq \max_{(\lambda, \tau) \in \Lambda(G)} \lambda
\end{equation}
as the largest value of $\lambda$ in this set independent of $\tau$. When $\lambda \leq \lambda^*(G)$, we say that $\lambda$ is rate-feasible on $G$. Similarly define the delay-minimizing deadline 
\begin{equation}
    \tau^*(G) \triangleq \min_{(\lambda,\tau) \in \Lambda(G)} \tau
\end{equation}
as the smallest value of $\tau$ in this set independent of $\lambda$. When $\tau \geq \tau^*(G)$ we say that $\tau$ is deadline-feasible on $G$. We will see that analyzing these quantities individually allows us to define $\Lambda(G)$ and provide insight into maximizing $\sigma$.

We first consider $\lambda^*(G)$, and assume without loss of generality that $\tau \geq D K^{\pi}$ for any policy $\pi$ with finite period $K^{\pi}$. Then deadline constraints~\eqref{eq:deadlinecond} are satisfied under any $\pi$ that schedules each link at least once per scheduling period, and we can drop the deadline constraints in the analysis. To ensure feasibility as the SLA duration $T \to \infty$, we need only ensure that queue stability conditions~\eqref{eq:queuestability} are satisfied and that allocated slice widths do not exceed link capacities. Setting slice widths to their smallest value while ensuring stability yields $w_{i,e(v)} = \lambda / \bar{\mu}_{e(v)}$, for each flow $f_i$ and link $e$. 

There are $|\mathcal{F}_v| = \prod_{d'=d+1}^D N_{d'}$ flows which pass through link $e(v)$, by the symmetry of $G$, so the capacity constraint at each link $e(v)$ is 
\begin{equation}
    \lambda \prod_{d'=d(v)+1}^D N_{d'} \leq \bar{\mu}_{e(v)} c_{e(v)}.
\end{equation}
Then $\lambda^*(G)$ is the solution to the LP
\begin{align}
\begin{aligned}\label{eq:lambdastaropt}
    \max_{\pi \in \Pi_c} \ &\lambda \\ 
    \text{s.t.} \ &\lambda \prod_{d'=d(v)+1}^D N_{d'} \leq \bar{\mu}_{e(v)}^{\pi} c_{e(v)}, \ \forall \ v \in V_c, \\
    &\sum_{v' \in \mathcal{C}_v} \bar{\mu}_{e(v')}^{\pi} \leq 1, \ \forall \ v \in V_c, \\ 
\end{aligned}
\end{align}
where the second constraint ensures that no more than one link per set of children is scheduled per slot.

\begin{lemma}\label{lemma:lambdastar}
    When $G$ is perfectly symmetric at each level,
    \begin{equation}\label{eq:lambdastar}
        \lambda^*(G) = \min_{0 \leq d < D} \Big(\prod_{d'=d+1}^D N_{d'} \Big)^{-1} c_{d+1},
    \end{equation}
    and this is achieved by a universal round-robin (URR) scheduling policy, defined as a round-robin policy at each set of children.
\end{lemma}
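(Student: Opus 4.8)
The plan is to prove the two inequalities $\lambda^*(G)\le M$ and $\lambda^*(G)\ge M$ separately, where $M$ denotes the claimed value $\min_{0\le d<D}\big(\prod_{d'=d+1}^{D}N_{d'}\big)^{-1}c_{d+1}$; the upper bound follows by aggregating the constraints of the linear program \eqref{eq:lambdastaropt}, and the lower bound by checking that URR is a feasible point of that program attaining objective $M$. For the upper bound I would take an arbitrary $\pi\in\Pi_c$ with $\lambda$ feasible, fix a level $d\in\{0,\dots,D-1\}$ and any node $v\in V_d$, and sum the per-link capacity constraint of \eqref{eq:lambdastaropt} over the $N_{d+1}$ children of $v$. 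Since each child lies one level deeper and carries $\prod_{d'=d+2}^{D}N_{d'}$ flows, the left-hand sides collapse to $N_{d+1}\lambda\prod_{d'=d+2}^{D}N_{d'}=\lambda\prod_{d'=d+1}^{D}N_{d'}$, while the right-hand sides sum to $c_{d+1}\sum_{v'\in\mathcal{C}_v}\bar\mu_{e(v')}^{\pi}\le c_{d+1}$ by the interference constraint. Rearranging gives $\lambda\le\big(\prod_{d'=d+1}^{D}N_{d'}\big)^{-1}c_{d+1}$ for every $d$, hence $\lambda\le M$ and $\lambda^*(G)\le M$.

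For achievability I would observe that, because interference is purely local --- only links whose source nodes share a parent conflict, so a link never conflicts with its parent's link --- the independent round-robins that define URR over each sibling set can all be executed concurrently, so $\mathrm{URR}\in\Pi_c$ and each link $e(v)$ is served exactly once every $N_{d(v)}$ slots, giving $\bar\mu_{e(v)}^{\mathrm{URR}}=1/N_{d(v)}$. Substituting $\lambda=M$ and these activation rates into \eqref{eq:lambdastaropt}, the capacity constraint at a level-$\ell$ node becomes $M\le\big(\prod_{d'=\ell}^{D}N_{d'}\big)^{-1}c_\ell$, which is exactly the $d=\ell-1$ term in the definition of $M$ and therefore holds, while the interference constraints hold since each sibling set contributes $N_{d+1}$ equal terms $1/N_{d+1}$. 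Hence URR is feasible for \eqref{eq:lambdastaropt} at $\lambda=M$, so $\lambda^*(G)\ge M$; combining with the upper bound, $\lambda^*(G)=M$. Since URR also yields $k_{e(v)}^{\mathrm{URR}}=N_{d(v)}$, any $\tau\ge\sum_{\ell=1}^{D}N_\ell$ --- permitted under the standing large-$\tau$ assumption --- makes the deadline constraint \eqref{eq:deadlinecond} hold as well, so URR genuinely satisfies all SLAs at rate $M$.

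I do not anticipate a real obstacle: the upper-bound computation already pins the optimal value of \eqref{eq:lambdastaropt} to $M$, and URR meets each binding link constraint with equality. The two points needing care are the level-index shift in the products $\prod N_{d'}$ between a node and its children, and the justification that it is precisely the absence of parent-child interference that lets every per-sibling-set round-robin run simultaneously, which is what makes $\bar\mu_{e(v)}^{\mathrm{URR}}$ equal exactly $1/N_{d(v)}$ rather than merely bounded by it.
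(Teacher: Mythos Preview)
Your proof is correct and the index bookkeeping checks out. It does, however, take a different route from the paper. The paper argues inductively over levels: it introduces $\lambda_d^*(G)$ as the throughput-optimal rate for the subtree rooted at a level-$d$ node, sets up the local LP for $\lambda_d^*$ with the extra constraint $\lambda\le\lambda_{d+1}^*$, observes that without this constraint the optimum is $c_{d+1}/\prod_{d'=d+1}^D N_{d'}$ attained by round-robin, obtains the recursion $\lambda_d^*=\min\{c_{d+1}/\prod_{d'=d+1}^D N_{d'},\,\lambda_{d+1}^*\}$, and unrolls it from $\lambda_{D-1}^*=c_D/N_D$. You instead work directly with the global LP \eqref{eq:lambdastaropt}: for the upper bound you aggregate the capacity constraints over one sibling set and cap the resulting $\sum\bar\mu$ by the interference budget, producing each term of the minimum as a valid cut; for the lower bound you simply exhibit URR as a feasible point. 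Your argument is shorter and avoids the auxiliary subtree quantities; the paper's recursion, on the other hand, makes the per-level bottleneck structure explicit in a way that foreshadows the Greedy Pruning analysis. Both proofs ultimately rest on the same two facts---symmetry makes equal activation rates $1/N_{d+1}$ optimal at each parent, and the interference model lets the per-parent round-robins run concurrently---so neither is strictly more general.
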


\begin{proof}
    Let $\lambda_d^*(G)$ be the throughput-optimal rate for the subtree rooted at node $v$ in level $d$ of the tree. By symmetry, this quantity is the same for all nodes at level $d$. Every flow which passes through a node at level $d$ must first pass through one of its children at level $d+1$, so $\lambda_d^*(G) \leq N_{d+1} \lambda_{d+1}^*(G)$ for any level $d$. Combining this with the constraints in~\eqref{eq:lambdastaropt}, $\lambda_d^*(G)$ is the solution to 
    \begin{align}
    \begin{aligned}
        \max_{\lambda, \pi \in \Pi_c} \ &\lambda \\ 
        \text{s.t.} \ &\lambda \prod_{d'=d+2}^D N_{d'} \leq \bar{\mu}_{e(v')}^{\pi} c_{d+1}, \ \forall \ v' \in \mathcal{C}_v, \\
        &\sum_{v' \in \mathcal{C}_v} \bar{\mu}_{e(v')}^{\pi} \leq 1, \\ 
        &\lambda \leq \lambda_{d+1}^*(G).
    \end{aligned}
    \end{align}.

    Without the last constraint, it is straightforward to see that $\lambda$ is maximized when $\bar{\mu}_{e(v')}^{\pi} = 1/N_{d+1}$ for each child $v'$. This is achieved by a round-robin policy over the set of children, which yields a value $c_{d+1} / \prod_{d'=d+1}^D N_{d'}$. The optimal value is then 
    \begin{equation}\label{eq:lambda_d}
        \lambda_d^* = \min \Big \{ c_{d+1} / \prod_{d'=d+1}^D N_{d'}, \lambda_{d+1}^* \Big \},
    \end{equation}
    and this is always achievable by a round-robin policy.

    At the level $D-1$ APs, the children represent the individual flows, and $\lambda_D^*$ is limited only by the amount of traffic each flow can send. For our purposes of throughput-maximization, we assume $\lambda_D^* \to \infty$, and from~\eqref{eq:lambda_d} we immediately see that $\lambda_{D-1}^* = c_D / N_D$. Then by induction over each level of the tree, we have 
    \begin{equation}
            \lambda^*(G) = \min_{0 \leq d < D} \Big(\prod_{d'=d+1}^D N_{d'} \Big)^{-1} c_{d+1},
    \end{equation}
    and this is achieved by a round-robin policy at each set of children in the tree.
\end{proof}

Next we turn to analyzing $\tau^*(G)$, which is equivalent to the min-max delay seen by any packet. Again because $\tau^*(G)$ is defined independently of $\lambda$, we can assume without loss of generality that $\lambda = \epsilon$, for some strictly positive but arbitrarily small $\epsilon$.
Then the queue stability and slice capacity constraints are satisfied under any policy that schedules each link at least once per scheduling period, and to ensure feasibility we need only ensure that deadline constraints are met. Recall from~\eqref{eq:deadlinecond} that a packet can meet any deadline $\tau \geq \sum_{e \in \mathcal{T}^{(i)}} k_e$. In other words, the maximum delay seen by any packet is upper bounded by the sum of inter-scheduling times along its route. Therefore,
\begin{equation}\label{eq:taugbound}
    \tau^*(G) \leq \max_{T \in \mathcal{T}} \sum_{e \in T} k_e
\end{equation}
where recall that $\mathcal{T}$ denotes the set of all leaf-to-root paths in the tree.

Each route sees the same scheduling constraints at a given level of the tree due to the symmetry of $G$. Therefore, minimizing the maximum inter-scheduling time at each level of the tree intuitively minimizes the upper bound in~\eqref{eq:taugbound}, because each route will see the same inter-scheduling times at each node. For a given node $v$, the solution to 
\begin{align}
\begin{aligned}
    \min_{\boldsymbol{k_v}} &\max_{v' \in \mathcal{C}_v} k_{e(v')} \\
    \text{s.t.} \ &\boldsymbol{k_v} \ \text{schedulable}
\end{aligned}
\end{align}
is given by a round-robin policy where $k_{e(v')} = |\mathcal{C}_v| = N_d$ for each $v'$ at level $d$. It turns out that a URR policy not only minimizes the bound in~\eqref{eq:taugbound}, but that this bound is tight.

\begin{lemma}\label{lemma:taustar}
    When $G$ is perfectly symmetric at each level,
    \begin{equation}\label{eq:taustar}
        \tau^*(G) = \sum_{d=1}^D N_d,
    \end{equation}
    and both throughput- and deadline-optimality are achieved by a URR policy. Therefore,
    \begin{equation}
        \Lambda(G) = \{ (\lambda,\tau) \ | \ \lambda \leq \lambda^*(G), \tau \geq \tau^*(G) \},
    \end{equation}
    and for any $(\lambda,\tau) \in \Lambda(G)$, a URR policy maximizes utility by supporting all SLAs on the tree $G$.
\end{lemma}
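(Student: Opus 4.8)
The plan is to pin down $\tau^*(G)$ by two matching bounds and then read off the product form of $\Lambda(G)$, the simultaneous optimality of URR, and the utility claim from Lemma~\ref{lemma:suffconds}. The \emph{upper bound} $\tau^*(G)\le\sum_{d=1}^D N_d$ is essentially already in hand: under URR every node $v$ at level $d$ is served in round-robin among the $N_d$ children of its parent, so $e(v)$ is scheduled once every $N_d$ slots, giving $\bar\mu_{e(v)}^{\mathrm{URR}}=1/N_d$ and $k_{e(v)}^{\mathrm{URR}}=N_d$; since a leaf-to-root route uses one link per level, $\sum_{e\in\mathcal{T}^{(i)}}k_e^{\mathrm{URR}}=\sum_{d=1}^D N_d$, and as observed before the lemma the queue-stability and slice-capacity conditions are immaterial when $\lambda=\epsilon$, so~\eqref{eq:deadlinecond}/\eqref{eq:taugbound} give the bound.

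The \emph{lower bound} $\tau^*(G)\ge\sum_{d=1}^D N_d$ is what I expect to be the main obstacle. Fix any $\pi\in\Pi_c$. Since consecutive activations of a link $e$ are at most $k_e^\pi$ slots apart, $e$ is active at least $K^\pi/k_e^\pi$ times per period, so $\bar\mu_e^\pi\ge 1/k_e^\pi$; combined with the interference constraint $\sum_{v'\in\mathcal{C}_v}\bar\mu_{e(v')}^\pi\le 1$, the slowest child of any node $v$ satisfies $\bar\mu_{e(\cdot)}^\pi\le 1/|\mathcal{C}_v|$, hence $k_{e(\cdot)}^\pi\ge|\mathcal{C}_v|$. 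Selecting a slowest child at every level, starting from the root, yields a leaf-to-root path $v_1,\dots,v_D$ with $k_{e(v_d)}^\pi\ge N_d$ for all $d$. It then remains to show that, for an adversarial arrival of arbitrarily small rate, some packet of the flow at $v_D$ waits at least $N_d$ at link $e(v_d)$ for every $d$, hence is delayed by at least $\sum_d N_d$ end-to-end; since no $\pi\in\Pi_c$ can then keep $(\lambda,\tau)$ feasible unless $\tau\ge\sum_d N_d$, this gives $\tau^*(G)\ge\sum_{d=1}^D N_d$. The subtle point is that the adversary controls only the leaf generation slot, while the packet's arrival epoch at each later link is then forced, so a single packet cannot obviously hit the worst gap at every hop. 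I would handle this by an induction up the tree carrying the invariant ``there is a flow in the subtree of $v_d$ and a generation slot for which the packet leaves $e(v_d)$ at some slot lying in a prescribed nonempty window of admissible slots, with accumulated delay $\ge\sum_{d'>d}N_{d'}$'', and then observing that because the bottleneck link $e(v_d)$ has $\bar\mu_{e(v_d)}^\pi\le 1/N_d$ its inter-scheduling gaps of length $\ge N_d$ are plentiful enough that one of their endpoints falls inside the window supplied from below, which extends the invariant by a further wait of $N_d$ at $e(v_d)$. (On small symmetric trees the mechanism is transparent: interfering siblings necessarily consume the favorable phases, forcing one flow onto the unfavorable phase at \emph{every} level; making this phase-counting argument work in general --- in particular in the perfectly periodic case, where the admissible windows shrink toward single slots and the path may have to be chosen phase-aware rather than greedily --- is the crux.)

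Granting $\tau^*(G)=\sum_{d=1}^D N_d$, the rest is routine bookkeeping with Lemma~\ref{lemma:suffconds}. Take any $(\lambda,\tau)$ with $\lambda\le\lambda^*(G)$ and $\tau\ge\tau^*(G)$ and run URR with slice widths $w_{i,e}=\lambda k_e^{\mathrm{URR}}$ as in~\eqref{eq:slicewidthcond}: then~\eqref{eq:deadlinecond} reads $\sum_{d=1}^D N_d=\tau^*(G)\le\tau$; link $e(v)$ with $v$ at level $d$ carries $\prod_{d'=d+1}^D N_{d'}$ flows each of width $\lambda N_d$, so~\eqref{eq:slicecapcond} reads $\lambda\prod_{d'=d}^D N_{d'}\le c_d$, which over $d=1,\dots,D$ is exactly $\lambda\le\lambda^*(G)$ after re-indexing~\eqref{eq:lambdastar}; and~\eqref{eq:queuestability} holds with equality since $w_{i,e}\bar\mu_{e}^{\mathrm{URR}}=\lambda N_d\cdot(1/N_d)=\lambda$. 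Hence URR satisfies every SLA, so $(\lambda,\tau)\in\Lambda(G)$; in particular $(\lambda^*(G),\tau^*(G))\in\Lambda(G)$, i.e.\ URR is simultaneously throughput- and deadline-optimal. Conversely $\lambda^*(G)$ and $\tau^*(G)$ are by definition the extreme values attained over $\Lambda(G)$, so $\Lambda(G)\subseteq\{(\lambda,\tau):\lambda\le\lambda^*(G),\ \tau\ge\tau^*(G)\}$, and with the inclusion just shown this is an equality. Finally, for every $(\lambda,\tau)\in\Lambda(G)$ the URR policy above supports all SLAs on $G$, and since SLAs are symmetric and utility is increasing in the number $\sigma$ of satisfied SLAs (which is at most $|\mathcal{F}|$), URR is utility-maximizing.
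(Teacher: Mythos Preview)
Your upper bound and the closing bookkeeping after $\tau^*(G)=\sum_d N_d$ are correct and essentially identical to the paper's. The genuine gap is in the lower bound, and you have diagnosed it yourself: after fixing a path $v_1,\dots,v_D$ by greedily picking the child with smallest $\bar\mu_e^\pi$ at each level (so that $k_{e(v_d)}^\pi\ge N_d$), you still need a \emph{single} packet that suffers a wait of at least $N_d$ at every hop simultaneously, and the phase-alignment obstacle you describe is real. The inductive ``admissible windows'' sketch may be completable, but as written it is not a proof; your own parenthetical concedes that in the perfectly periodic case the windows can collapse to single slots and the path may have to be chosen phase-aware rather than greedily. Note also that the inequality $k_{e(v_d)}^\pi\ge N_d$ by itself says nothing about $\tau^*$, since $\sum_e k_e\le\tau$ is only a \emph{sufficient} condition in Lemma~\ref{lemma:suffconds}; the entire burden is on the packet-construction step you left open.

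The paper avoids this difficulty by a much simpler device: instead of first selecting a path via the summary statistics $k_e$ or $\bar\mu_e$ and then hunting for a bad packet, it fixes an arbitrary reference slot $t$ and \emph{builds the path and the bad packet together}, working top-down and anchored in actual scheduling times. By pigeonhole on the $N_1$ children of the root, some child $v_1$ has its most recent activation of $e(v_1)$ at a slot $t_1\le t-N_1$. Re-anchor at $t_1$ and repeat among $v_1$'s $N_2$ children to obtain $v_2$ with last activation $t_2\le t_1-N_2$, and so on down to a leaf $v_D$ with $t_D\le t-\sum_{d}N_d$. A packet generated at $v_D$ just after $t_D$ then cannot reach the root by $t$: to use $e(v_1)$ on or before $t$ it must already be at $v_1$ by $t_1$, hence must have used $e(v_2)$ by $t_1$, whose last activation by then is $t_2$, and so on back to $e(v_D)$ at $t_D$, which precedes the packet's generation. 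Because the child chosen at each level depends on the \emph{current anchor time} rather than on $k_e$ or $\bar\mu_e$, the phases are automatically aligned and no window-threading is needed. This time-anchored backward pigeonhole is the missing idea in your plan.
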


\begin{proof}
    Consider the set of $N_1$ children belonging to the root node. Because only one can be scheduled per slot, at any time $t$ there must be at least one child which has not been scheduled since $t-N_1$. Without loss of generality, pick one such child and consider its $N_2$ children. They are subject to the same interference constraints, so in the slot $t' \leq t - N_1$ where their parent was last scheduled, there must have been at least one child which had not been scheduled since $t'-N_2 \leq t - N_1 - N_2$.

    Proceeding by induction, there must be at least one leaf node at time $t$ which has not had scheduling opportunities for a packet which arrived at time $\hat{t} \leq t - \sum_{d=1}^D N_d$ to be delivered at the root. Because packets arrive in every slot, this packet is guaranteed to exist and experiences an end-to-end delay of $t-\hat{t}$. Therefore, the largest delay seen by any packet is $\tau^* \geq t - \hat{t} \geq \sum_{d=1}^D N_d$.

    Recall from~\eqref{eq:deadlinecond} that any deadline $\tau$ is achievable when $\sum_{e \in \mathcal{T}^{(i)}} k_e \leq \tau$ for every flow $f_i$. Under a URR policy, $k_e = N_d$ for a link $e$ at level $d$, which yields an achievable deadline $\sum_{1 \leq d \leq D} N_d$. This is equal to the lower bound above, so this lower bound is both achievable and met by a URR policy.
\end{proof}

This lemma provides an exact characterization of $\Lambda(G)$ and shows that both $\lambda^*(G)$ and $\tau^*(G)$ can be jointly achieved by the same simple and intuitive URR policy. For $(\lambda, \tau)$ that lie within $\Lambda(G)$, this policy supports all flows and is by definition utility maximizing. This result also shows that, perhaps suprisingly, increasing the number of hops in a backhaul network actually decreases packet delay. If each flow was connected to a single AP, then the minimum achievable deadline would be the total number of flows $\prod_d N_d$. By introducing additional hops, the scheduling policy is able to mulitplex over both time and space, thereby decreasing $\tau^*$.

\subsection{Greedy Pruning}

We next consider the case when $\tau < \tau^*(G)$, and the network cannot simultaneously satisfy deadline constraints for all flows. We will construct a subset of supported flows by pruning branches from the tree and cutting off all flows connected to these branches. When a leaf is pruned, this corresponds to cutting off a single flow, and when a branch farther up the tree is pruned, this corresponds to cutting off all flows whose APs belong to the subtree rooted at that branch node.

For a given tree $G$ and deadline $\tau$, let $\bar{\tau}(G) = \tau^*(G) - \tau$, which is the amount of time needed to be ``cut'' from $\tau^*(G)$ to achieve $\tau$. Then consider the set of nodes $V_{d-1}$ at level $d-1$. If a branch is pruned from each of these nodes, the value of $N_d$ is decremented by one, and we refer to this as a level $d$ prune. Note that we are pruning symmetrically across a level, so the resulting tree $G'$ is still symmetric and Lemmas~\ref{lemma:lambdastar} and~\ref{lemma:taustar} still hold on $G'$. Denote the new value of $N_d$ after pruning as $N_d(G')$. Then
\begin{equation}
    \tau^*(G') = \sum_{d'=1}^D N_{d'}(G') = \sum_{d'=1}^D N_{d'} - 1 = \tau^*(G) - 1,
\end{equation}
so performing $\bar{\tau}(G)$ of these pruning operations yields a tree $G^*$ with $\tau^*(G^*) = \tau$. Note that $\tau^*(G^*)$ is independent of which levels are pruned.

Each level $d$ prune cuts off $1/N_d$ of the flows connected to each node in $V_{d-1}$, so by symmetry it cuts off $1/N_d$ of all flows in the tree. Pruning at any level has the same effect on deadlines, but clearly pruning at level $d$ as opposed to level $d'$ cuts off fewer flows when $d > d'$. Based on this insight, we define a Greedy Pruning algorithm which performs $\bar{\tau}(G)$ sequential pruning operations, each time selecting a node to prune at any level $d \in \argmax_d N_d(G')$, where $G'$ is the current state of the tree at that operation. This algorithm is guaranteed to reach deadline feasibility while minimizing the number of flows which are pruned, producing the largest deadline-feasible subset of $\mathcal{F}$.

\begin{theorem}\label{th:greedyprune}
    Let $G$ be perfectly symmetric at each level, and $G^*$ be the output of the Greedy Pruning algorithm on $G$. If $\lambda \leq \lambda^*(G^*)$, then $(\lambda,\tau) \in \Lambda(G^*)$, and a URR policy on $G^*$ is utility-maximizing on $G$.
\end{theorem}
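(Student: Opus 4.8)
The plan is to establish the two parts of the claim separately: that $(\lambda,\tau)\in\Lambda(G^*)$, so that a URR policy on $G^*$ supports every flow retained in $G^*$, and that no admission-plus-scheduling policy on $G$ can support more flows than this. For the first part I would observe that Greedy Pruning removes branches symmetrically, so $G^*$ is again perfectly symmetric and Lemmas~\ref{lemma:lambdastar} and~\ref{lemma:taustar} apply to it; moreover, by the discussion preceding the theorem, $G^*$ is reached by $\bar\tau(G)=\tau^*(G)-\tau$ pruning operations, each decrementing $\tau^*$ by one, so $\tau^*(G^*)=\tau$ (with the trivial conventions that no pruning occurs when $\tau\ge\tau^*(G)$ and that $\mathcal F'=\emptyset$ when $\tau<D$). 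Together with the hypothesis $\lambda\le\lambda^*(G^*)$, Lemma~\ref{lemma:taustar} then gives $(\lambda,\tau)\in\Lambda(G^*)$ and says a URR policy on $G^*$ supports all $\prod_{d=1}^{D}N_d(G^*)$ of its flows; this is a legitimate policy on $G$ (the pruned links simply carry no admitted traffic), so the theorem reduces to showing that $\prod_{d=1}^{D}N_d(G^*)$ upper-bounds the number of supportable flows on $G$.

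For that upper bound I would first derive a delay lower bound valid for an \emph{arbitrary} admitted set $\mathcal F'\subseteq\mathcal F$, adapting the adversarial argument in the proof of Lemma~\ref{lemma:taustar}. Invoking Lemma~\ref{lemma:suffconds} to restrict to cyclic policies, call a node \emph{active} if its subtree contains a flow of $\mathcal F'$ and let $a_v$ be the number of active children of an active node $v$; the active nodes form a subtree rooted at the root. Re-running the level-by-level argument while considering only active children at each step (scheduling an inactive child serves no queued packet and cannot help) shows that at any time there is an active leaf at whose device a packet has been waiting at least $\max_P\sum_{v\in P}a_v$ slots, the maximum taken over active root-to-AP paths $P$. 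Hence $\mathcal F'$ is supportable only if $\max_P\sum_{v\in P}a_v\le\tau$, and this is necessary for every $\lambda>0$ since larger $\lambda$ only adds queueing delay.

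Next I would turn this into a counting bound. Inducting down the tree, a level-$d$ node all of whose descending active paths have active-sibling sum at most $b$ has at most $M_d(b)$ active devices in its subtree, where $M_{D-1}(b)=\min(N_D,\lfloor b\rfloor)$ and $M_d(b)=\max_{1\le a\le N_{d+1}}a\,M_{d+1}(b-a)$ for $d<D-1$; identifying $a$ with $n_{d+1}$, this telescopes to
\begin{equation}\label{eq:maxprod}
M_0(\tau)=\max\Bigl\{\prod_{d=1}^{D}n_d \,:\, \sum_{d=1}^{D}n_d\le\tau,\ 1\le n_d\le N_d\Bigr\},
\end{equation}
so every supportable $\mathcal F'$ satisfies $|\mathcal F'|\le M_0(\tau)$. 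It then remains to check that Greedy Pruning attains the maximum in~\eqref{eq:maxprod}. Its output $(N_d(G^*))_{d}$ is feasible ($1\le N_d(G^*)\le N_d$ and $\sum_d N_d(G^*)=\tau$ because the algorithm prunes until the sum equals $\tau$), and a standard exchange argument shows it is optimal: if a feasible vector has $n_i\ge n_j+1$ with $n_j<N_j$, then moving one unit from $n_i$ to $n_j$ keeps the sum fixed and multiplies the product by $(n_i-1)(n_j+1)/(n_i n_j)\ge 1$, so iterating reaches a vector that is as balanced as the caps allow without decreasing the product --- and Greedy Pruning, which repeatedly decrements a current-maximum coordinate, produces exactly such a vector. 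Combining, every supportable $\mathcal F'$ has $|\mathcal F'|\le\prod_{d=1}^{D}N_d(G^*)$, matching the URR achievability, so a URR policy on $G^*$ is utility-maximizing on $G$.

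I expect the main obstacle to be this upper-bound half, and in particular ruling out that some \emph{asymmetric} admitted set outperforms the symmetric tree produced by Greedy Pruning. The delicate points are the inductive reduction to~\eqref{eq:maxprod} --- correctly handling the per-level caps $N_{d+1}$ and the requirement that at least one child be active at every remaining level --- and verifying that the myopic greedy decrements actually land on a product-maximizer under the constraints $n_d\le N_d$, not merely on some locally balanced point. The feasibility claim and the delay lower bound are close to verbatim re-runs of arguments already in the excerpt, so I expect them to be routine.
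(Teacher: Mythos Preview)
Your overall plan is more ambitious than the paper's: the paper's proof only compares Greedy Pruning against other \emph{symmetric} level-by-level prunings (via the product formula $\prod_{d\in P}(N_d-1)/N_d$ and a simple greedy-exchange argument), and then asserts without further justification that this is utility-maximizing on $G$. You correctly identify that ruling out \emph{asymmetric} admitted sets is the real content, and you attempt to supply that missing step.

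Unfortunately the key inequality you extract from the Lemma~\ref{lemma:taustar} argument---that the worst-case delay is at least $\max_P\sum_{v\in P}a_v$---is false for asymmetric active subtrees. The adversarial construction in Lemma~\ref{lemma:taustar} picks, at each level, the active child that was scheduled \emph{longest ago}; this produces a schedule-dependent path $P_t$, and only yields delay $\ge\sum_{v\in P_t}a_v$ for \emph{that} path. A policy that schedules a heavily loaded child more frequently than its siblings can ensure the adversarial path never enters the heavy branch. Concretely, take $D=2$, $N_1=N_2=3$, and admit all three devices under $v_1$ but only one device under each of $v_2,v_3$. Then $a_{\text{root}}=3$, $a_{v_1}=3$, $a_{v_2}=a_{v_3}=1$, so $\max_P\sum a_v=6$. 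Yet with the root schedule $(v_1,v_2,v_1,v_3)$ repeating (giving $k_{e(v_1)}=2$, $k_{e(v_2)}=k_{e(v_3)}=4$) and round-robin at $v_1$ (giving $k=3$ for each device there), every admitted flow meets $\sum_e k_e\le 5$, so $\tau=5$ suffices by Lemma~\ref{lemma:suffconds}. Your necessary condition would have forbidden this.

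Because the $\max_P$ bound fails, the induction that collapses the counting problem to \eqref{eq:maxprod} breaks at its very first step: you cannot conclude that a node with ``active-sibling sum at most $b$ on every descending path'' is a necessary description of a deadline-feasible admitted set. The exchange argument for Greedy Pruning attaining the maximum of \eqref{eq:maxprod} is fine, but it is now proving the wrong optimization. What is actually needed is an upper bound that respects the pinwheel density constraint $\sum_{v'\in\mathcal C_v}1/k_{e(v')}\le 1$ at each node together with the per-path deadline constraint---i.e., a recursion of the form $F_d(b)=\max\sum_i F_{d+1}(b-k_i)$ over schedulable $(k_i)$, not over a single integer $a$. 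Showing that equal $k_i$'s are optimal in this recursion (so that it reduces to your $M_d$) is a concavity/majorization statement that you have not established and that does not follow from the Lemma~\ref{lemma:taustar} argument alone.
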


\begin{proof}
    Each prune of a node $v$ branch decrements $\bar{\tau}$ for all leaves in the subtree rooted at $v$, so for a given tree $G$, we must perform $\bar{\tau}(G)$ of these operations. As discussed above, pruning branches at level $d$ cuts off $1/N_d$ of the flows in the tree. Consider a set of levels $P$ which are pruned. Then the remaining number of flows after pruning is complete is given by 
    \begin{equation}
        \prod_{d \in P} \frac{N_d-1}{N_d},
    \end{equation}
    where it is understood that if level $d$ is pruned more than once, $N_d$ is decremented after each prune. Each term in the product is maximized when $N_d$ is maximized, so greedily choosing $d \in \argmax N_d$ must be optimal at each step. If not, then there must exist some $d' \notin P$ which can replace some element in $P$ and decrease the product. By definition, however, $N_{d'} \leq N_d$ for all $d \in P$, so this is a contradiction.

    The algorithm continues until the augmented tree $G^*$ is deadline-feasible, so if $\lambda \leq \lambda^*(G^*)$, then $(\lambda, \tau) \in \Lambda(G^*)$ by definition. Because $G$ started as a symmetric tree and Greedy Pruning performs symmetric operations, $G^*$ is also symmetric, and by Lemma~\ref{lemma:taustar}, $URR$ is utility-maximizing on $G^*$. Finally, because the minimum number of flows have been pruned from the tree to reach a deadline-feasible $G^*$, this must also be utility-maximizing for $G$.
\end{proof}

From~\eqref{eq:lambdastar}, we observe that $\lambda^*$ can only increase as branches are pruned, so $\lambda^*(G^*) \geq \lambda^*(G)$, and even if $\lambda$ is rate-infeasible on the original tree $G$, it may be rate-feasible on the Greedy Pruning output $G^*$. Unfortunately, if $\lambda$ is not rate-feasible on $G^*$, there is no Greedy Pruning alternative that yields a utility-maximizing policy, and there may not exist a URR policy that is utility maximizing.

\begin{figure}
    \centering
    \includegraphics[width=0.35\textwidth]{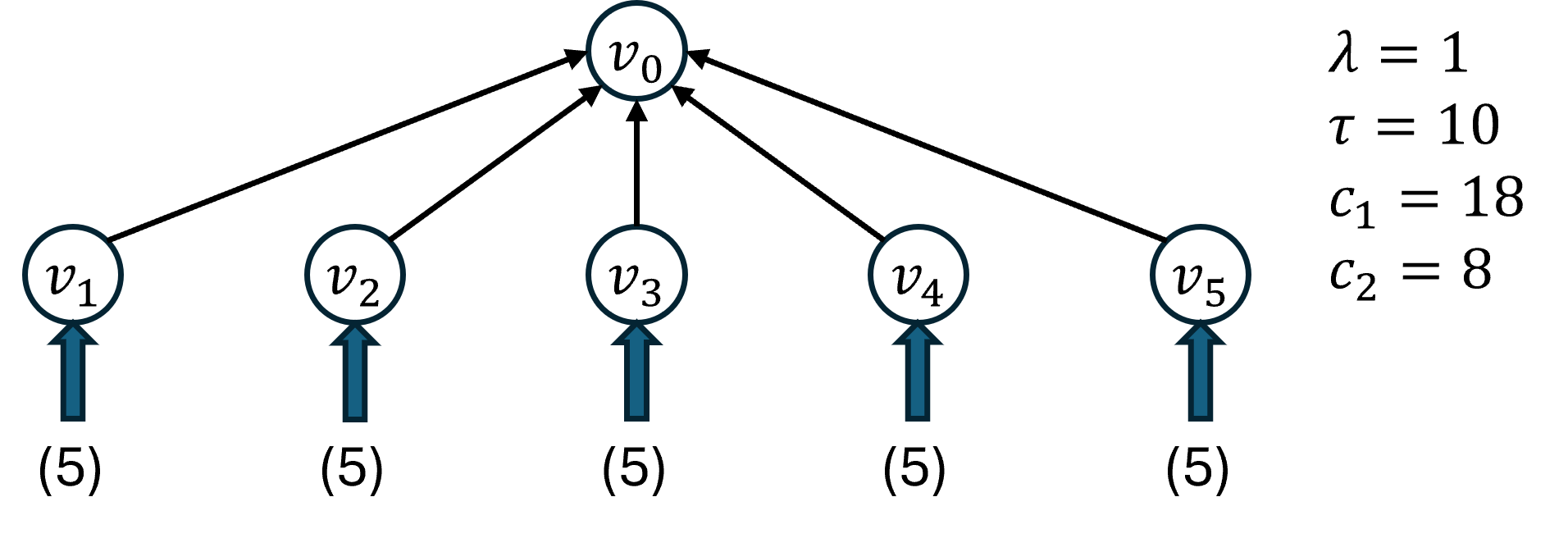}
    \caption{Example tree with $D=2$ and $N_D=8$}
    \label{fig:example-tree}
\end{figure}

To see this, consider the depth-two tree $G$ in Figure~\ref{fig:example-tree}. From~\eqref{eq:taustar}, we can immediately see that $\tau^*(G) = 10 \leq \tau$, so $\tau = 10$ is deadline-feasible on $G$. Next we use~\eqref{eq:lambdastar} to compute $\lambda^*(G) = 18/25 > \lambda$, so $\lambda=1$ is rate-infeasible on $G$. One can easily verify that pruning once at either level of the tree is insufficient to make $\lambda$ rate-feasible, so we perform two pruning operations. Pruning two branches at either level yields $\lambda^*(G^*) = 18/15$ and $\sigma = 15$, while pruning one branch at each level yields $\lambda^*(G^*) = 18/16$ and $\sigma=16$, making it the optimal URR policy. We show the pruned tree and this policy on the left of Figure~\ref{fig:example-sol}.

On the right we show the optimal policy for the tree $G$ and note that it is not a URR policy. By pruning asymmetrically and deviating from URR, we are able to support an additional flow. In particular, let the flows at each AP be scheduled round-robin and the level-$1$ APs follow the schedule 
\begin{equation}
    \pi = \{ v_1, v_3, v_2, v_4, v_1, v_5, v_2, v_3, v_1, v_4, v_2, v_5 \},
\end{equation}
which has a length of $12$ slots and repeats itself indefinitely. One can verify that this satisfies the vector of inter-scheduling times in Figure~\ref{fig:example-sol}, and that both slice capacity and deadline constraints are met.

This example shows the need for a more complete framework to maximize utility under general conditions. In the subsequent sections we will develop this framework, which can handle not only rate-infeasibililty on symmetric trees, but also asymmetric topologies and arrivals. For ease of exposition we will continue to focus on the case where flows have symmetric parameters $\lambda$ and $\tau$, so our objective is still to maximize $\sigma$. This can easily be extended to flows with general parameters, and due to space constraints we will include this extension in a future version of this work.

Let $\sigma_v^{\pi}$ be the number of supported flows in the subtree rooted at node $v$ under policy $\pi$. Using the conditions in Lemma~\ref{lemma:suffconds}, the general utility maximization problem can be written as the following integer program, where $\tilde{\mathcal{T}}$ represents the set of leaf-to-root paths which are not pruned by the admission decisions.
\begin{subequations}
\begin{align}\label{eq:globalopt}
    (P1) \ : \ \max_{\pi \in \Pi_c, \tilde{\mathcal{T}} \subseteq \mathcal{T}} \ &\sigma^{\pi} \\
    \text{s.t.} \ &\sigma_v^{\pi} = \sum_{v' \in \mathcal{C}_v} \sigma_{v'}^{\pi}, \ \forall \ v \in V, \\ 
    &\sigma_v^{\pi} \leq N_v, \ \forall \ v \in V_{D-1}, \\ 
    &\sigma_v^{\pi} \lambda k_{e(v)}^{\pi} \leq c_{e(v)}, \ \forall \ v \in V,\label{subeq:slicecap} \\ 
    &\sum_{e \in T} k_e^{\pi} \leq \tau, \ \forall \ T \in \tilde{\mathcal{T}}, \\ 
    &\boldsymbol{k}_v^{\pi} \ \text{schedulable}, \ \forall \ v \in V, \\
    &\sigma_v^{\pi}, k_{e(v)}^{\pi} \in \mathbb{Z}_+, \ \forall \ v \in V.
\end{align}
\end{subequations}
The first constraint is a flow conservation condition, the second ensures that we can support no more flows than are present at each AP, the third is the slice capacity constraint, and the fourth is the deadline constraint.

Before addressing how to solve this program, we first must explore the schedulability constraint for each set of inter-scheduling times $\boldsymbol{k}_v$. This is the topic of the next section, which introduces an iterative solution to the pinwheel scheduling problem. This section is not essential to understanding the rest of the paper, and can be skipped if the reader is not interested in the details.

\begin{figure}
    \centering
    \includegraphics[width=0.48\textwidth]{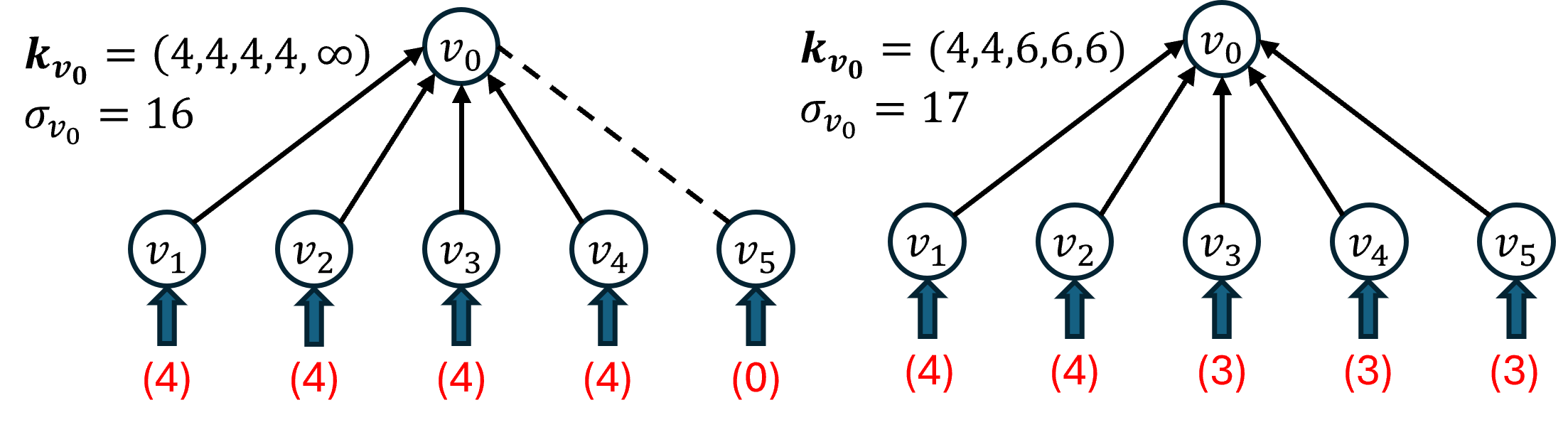}
    \caption{Best URR policy (left) vs. optimal policy (right)}
    \label{fig:example-sol}
\end{figure}

\section{Inductive Scheduling}\label{sec:inductivescheduling}

As discussed in the introduction, forming a schedule with constraints on maximum inter-scheduling times is known as pinwheel scheduling, and determining whether such a schedule exists for a given $\boldsymbol{k}$ is NP-complete~\cite{holte1989pinwheel}. Polynomial-time algorithms exist which can schedule certain classes of vectors, and the state of the art algorithm, known as $S_{xy}$ in the literature, is able to schedule all known polynomial-time schedulable vectors~\cite{chan_general_1992}. 

The schedulability of a vector $\boldsymbol{k} = (k_0,\dots,k_{M-1})$ is closely linked to a quantity known as its density and defined as 
\begin{equation}
    \rho(\boldsymbol{k}) \triangleq \sum_{i=0}^{M-1} 1/k_i.
\end{equation}
We further define the set of all vectors with densities less than or equal to $\rho$ as $\mathcal{K}_{\rho}$. It is easy to see that no vectors outside of $\mathcal{K}_1$ are schedulable, which provides an upper bound and a necessary condition on schedulable density. While this condition is far from sufficient, some vectors with density $1$ are schedulable. Consider a vector of $M$ elements, each equal to $M$. Vectors of this form have a density of $1$ and are trivially schedulable by a round-robin policy. More generally, however, sufficient conditions are difficult to obtain. It was recently shown that all vectors in $\mathcal{K}_{5/6}$ are schedulable~\cite{kawamura_proof_2024}, and this is the largest possible density bound. This can be seen by considering the vector $\boldsymbol{k} = (2,3,x)$, which has density $5/6 + 1/x$ and is not schedulable for any value of $x$.

While schedules are known to exist for all vectors in $\mathcal{K}_{5/6}$, there exists no known algorithm for scheduling all vectors in $\mathcal{K}_{5/6}$. The largest value of $\rho$ for which $\mathcal{K}_{\rho}$ is known to be polynomial-time schedulable is $0.7$, and all vectors in $\mathcal{K}_{0.7}$ can be scheduled by the $S_{xy}$ algorithm. This algorithm works by finding a vector $\boldsymbol{k'}$ where $k_i' \leq k_i$ for every $i$, as well as two integers $x$ and $y$ which need not be contained in $\boldsymbol{k'}$, but where each $k_i'$ takes the form $2^{\alpha_i} x$ or $2^{\alpha_i} y$ for some integer $\alpha_i$. Denote $\boldsymbol{k'_x}$ as the subvector whose elements are powers of two times $x$, and $\boldsymbol{k'_y}$ likewise with respect to $y$. Then if
\begin{equation}\label{eq:xyschedulability}
    \frac{\lceil x \rho(\boldsymbol{k'_x}) \rceil}{x} + \frac{\lceil y \rho(\boldsymbol{k'_y}) \rceil}{y} \leq 1,
\end{equation}
then $\boldsymbol{k'}$ is schedulable, and as a result $\boldsymbol{k}$ is schedulable. The $S_{xy}$ algorithm, which has complexity $O(M^2)$, searches for values of $\boldsymbol{k'}$, $x$, and $y$ which satisfy this condition, and if found produces a valid schedule. We encourage readers to reference~\cite{chan_general_1992} for complete details.

Although $S_{xy}$ represents the state of the art in pinwheel scheduling, it is easy to find schedulable vectors for which $S_{xy}$ does not return a schedule. One example is the vector $\boldsymbol{k} = (3,5,5,9,9)$, which has a density of $0.956$. Let $v_i$ represent the task corresponding to the inter-scheduling time $k_i$. One can verify that the cyclic schedule 
\begin{equation}\label{eq:examplepi}
    \pi = \{ v_0, v_1, v_2, v_0, v_3, v_1, v_0, v_2, v_4 \},
\end{equation}
which repeats itself every $9$ slots, satisfies the inter-scheduling times in $\boldsymbol{k}$, but $S_{xy}$ is unable to find this schedule.

To help close this gap, we introduce a novel pinwheel scheduling algorithm called \textit{Inductive Scheduling ($IS$)}. This algorithm uses $S_{xy}$ as a subroutine and is able to schedule all vectors schedulable by $S_{xy}$, while significantly expanding this set to include other vectors, including the example above. We first introduce the algorithm and then show how it can be used to solve $(P1)$.

The $IS$ algorithm is based on a technique of removing elements from a vector $\boldsymbol{k}$ one by one and augmenting the remaining elements such that if the augmented vector is schedulable then the original vector must also be schedulable. This process is continued until the augmented vector satisfies a known schedulability condition, or until the density of the augmented vector exceeds $1$. Once a known condition is satisfied, a schedule is generated for the augmented vector, and the elements which were removed are re-inserted into the schedule one by one. The resulting schedule is then guaranteed to satisfy the original inter-scheduling times.

We use superscripts to denote the value of $\boldsymbol{k}$ at each step, or iteration, of the algorithm. Let $\boldsymbol{k^0} = \boldsymbol{k}$ be the initial vector sorted in increasing order, and let $k_i^j$ be the $i$-th element at iteration $j$. To maintain a consistent vector length, we ``remove'' an element $m$ at iteration $j > 0$ by setting $k_m^j \to \infty$, where this is understood to mean that $v_m$ does not appear in the corresponding schedule $\pi^j$. Furthermore, note that $k_m^j = \infty$ has no effect on the density of $\boldsymbol{k^j}$.

Consider element $j$ at iteration $j$ and assume that a valid schedule $\pi^j$ exists for $\boldsymbol{k^j}$. Now further assume that $v_j$ is scheduled \textit{exactly} every $k_j^j$ slots in $\pi^j$. Note that the inter-scheduling time constraints only specify that $v_j$ be scheduled \textit{at least} every $k_j^j$ slots, so this is a more stringent condition. When inter-scheduling times are exact for $v_j$, we say that the schedule is regular with respect to $v_j$, and by imposing this constraint we say we are regularizing the schedule with respect to $v_j$.

\begin{lemma}\label{lemma:isinduction}
    Given a vector $\boldsymbol{k^j}$, let 
    \begin{equation}\label{eq:kevolution}
        k_i^{j+1} = \begin{cases}
            k_i^j - \big \lceil \frac{k_i^j}{k_j^j} \big \rceil, &i > j, \\
            \infty, &i = j.
        \end{cases}
    \end{equation}
    If $\boldsymbol{k^{j+1}}$ is schedulable, then $\boldsymbol{k^j}$ is also schedulable, and there exists a satisfying schedule $\pi^j$ which is regular with respect to $v_j$.
\end{lemma}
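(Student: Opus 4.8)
The plan is to prove the implication constructively, by ``stretching'' any schedule $\pi^{j+1}$ for $\boldsymbol{k^{j+1}}$ into a schedule $\pi^j$ for $\boldsymbol{k^j}$. Concretely, I would designate the slots $t\equiv 0 \pmod{k_j^j}$ as \emph{$v_j$-slots}, place $v_j$ in each of them, and fill the remaining slots, read left to right, with the entries of $\pi^{j+1}$ in order (leaving a slot idle wherever $\pi^{j+1}$ is idle). If $\pi^{j+1}$ is cyclic with period $K^{j+1}$, then $\pi^j$ is cyclic with period $k_j^j K^{j+1}$. By construction $v_j$ is scheduled \emph{exactly} every $k_j^j$ slots, so $\pi^j$ satisfies the constraint $k_j^j$ and is regular with respect to $v_j$; the tasks $v_i$ with $i<j$ have $k_i^j=\infty$ and do not appear, so only the tasks $v_i$ with $i>j$ remain to be checked.

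For those, I would first record the standard reformulation of the pinwheel constraint: a task has maximum inter-scheduling time at most $k$ in a schedule if and only if every window of $k$ consecutive slots of that schedule contains an occurrence of the task. The heart of the argument is then a window count. Fix $i>j$ and an arbitrary window $W$ of $k_i^j$ consecutive slots of $\pi^j$. Since the $v_j$-slots are exactly the multiples of $k_j^j$, any $k_i^j$ consecutive integers contain at most $\lceil k_i^j/k_j^j\rceil$ of them, so $W$ contains at least $k_i^j-\lceil k_i^j/k_j^j\rceil=k_i^{j+1}$ slots carrying $\pi^{j+1}$. Because the $v_j$-slots are deleted in a translation-invariant way and $\pi^{j+1}$ is laid down in order, the map from non-$v_j$ slots of $\pi^j$ to slots of $\pi^{j+1}$ is order-preserving, so the $\pi^{j+1}$-carrying slots lying in $W$ constitute a \emph{contiguous} block of the $\pi^{j+1}$ timeline of length at least $k_i^{j+1}$. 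By the assumed schedulability of $\boldsymbol{k^{j+1}}$, that block contains an occurrence of $v_i$, hence so does $W$. Since $W$ was arbitrary, $v_i$ has maximum inter-scheduling time at most $k_i^j$ in $\pi^j$; doing this for every $i>j$ shows $\pi^j$ satisfies $\boldsymbol{k^j}$, which proves the claim.

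The step I expect to require the most care is this window count together with the bookkeeping that the undeleted slots of an \emph{arbitrary} (not aligned) window map to a single contiguous block of $\pi^{j+1}$; in particular one should verify that $\lceil k_i^j/k_j^j\rceil$ is exactly the worst-case number of $v_j$-slots in a length-$k_i^j$ window, which is what makes the definition~\eqref{eq:kevolution} tight. It is also worth disposing of the degenerate cases up front: if $k_j^j=1$, then the finite entries of $\boldsymbol{k^{j+1}}$ all become $0$ and $\boldsymbol{k^{j+1}}$ is not schedulable, so the implication is vacuous; otherwise $k_j^j\ge 2$ and, since the entries $k_j^j,k_{j+1}^j,\dots$ are nondecreasing, $k_i^{j+1}\ge 1$ for every $i>j$ and the construction above goes through.
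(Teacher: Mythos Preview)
Your proposal is correct and follows essentially the same construction as the paper: insert $v_j$ at every multiple of $k_j^j$ into the schedule $\pi^{j+1}$ and push the remaining entries forward, then verify the constraints for the surviving tasks. The only difference is cosmetic---you phrase the verification as a window count (any length-$k_i^j$ window contains at most $\lceil k_i^j/k_j^j\rceil$ inserted $v_j$-slots, hence at least $k_i^{j+1}$ contiguous slots of $\pi^{j+1}$), whereas the paper bounds the gap between consecutive occurrences of $v_i$ directly via the implicit relation $n_j=\lceil (k_i^{j+1}+n_j)/k_j^j\rceil$ and solves it as $k_i^{j+1}+n_j=k_i^j$; the two arguments are equivalent, and your window formulation is arguably the cleaner of the two.
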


\begin{proof}
    If a vector $\boldsymbol{k^{j+1}}$ is schedulable, then each element $i$ must appear at least once every $k_i^{j+1}$ slots in the schedule $\pi^{j+1}$, so
    \begin{equation}
        \pi_{i,l}^{j+1} - \pi_{i,l-1}^{j+1} \leq k_i^{j+1}, \ \forall \ 0 \leq i < M, l \geq 1.
    \end{equation}
    Suppose we reintroduce $v_m$ back into the schedule by inserting it at slots $0, k_m^j, 2k_m^j, \dots$, and pushing subsequent elements back a slot each time it is inserted. The resulting schedule $\pi^j$ is by definition regular with respect to $v_m$.

    Now consider the remaining elements. Because the initial vector was sorted, each tasks $v_i$ where $i < m$ with finite $k_i^{j+1}$ has $k_i^{j+1} < k_m^j$. Therefore, at most one occurrence of $v_m$ can occur between consecutive occurrences of $v_i$, and
    \begin{equation}
        \pi_{i,l}^j - \pi_{i,l-1}^j \leq k_i^{j+1}+1 = k_i^j, \ \forall \ l \geq 1.
    \end{equation}

    Similarly, each task $v_i$ with $i \geq m$ and finite $k_i^{j+1}$ has at most $n_m^j = \lceil (k_i^{j+1} + n_m^j) / k_m^j \rceil$ occurrences of $v_m$ between consecutive occurrences of $v_i$. Then 
    \begin{equation}
        \pi_{i,l}^j - \pi_{i,l-1}^j \leq k_i^{j+1} + n_m^j = k_i^{j+1} + \Big \lceil \frac{k_i^{j+1} + n_m^j}{k_m^j} \Big \rceil, \ \forall \ l \geq 1.
    \end{equation}
    Rearranging yields 
    \begin{equation}
        k_i^{j+1} = k_i^{j+1} + n_m^j - \Big \lceil \frac{k_i^{j+1} + n_m^j}{k_m^j} \Big \rceil.
    \end{equation}
    From the definition of $k_i^{j+1}$ in~\eqref{eq:kevolution}, this equation has a solution with $k_i^{j+1} + n_m^j = k_i^j$.

    Therefore, $\pi_{i,l}^j - \pi_{i,l-1}^j \leq k_i^j$ for all $i$ with finite $k_i^j$ and all $l \geq 1$, and $\pi^j$ satisfies the inter-scheduling conditions of $\boldsymbol{k^j}$. This completes the proof.
\end{proof}

This lemma provides the crucial step in the $IS$ algorithm. If a vector $\boldsymbol{k}$ is not schedulable by any known methods, but after regularizing with respect to $v_j$ the resulting vector \textit{is} schedulable, then Lemma~\ref{lemma:isinduction} shows that we can schedule the original vector. Consider the example from above with $\boldsymbol{k} = (3,5,5,9,9)$ and recall that this is not schedulable by $S_{xy}$ or any known algorithm in the literature. If we regularize with respect to $v_0$, this yields $\boldsymbol{k^1} = (\infty,3,3,6,6)$. This has density $1$ and consists of only two distinct finite values, which is sufficient to be schedulable by $S_{xy}$ with 
\begin{equation}
    \pi^1 = \{ v_1, v_2, v_3, v_1, v_2, v_4 \}.
\end{equation}
Now re-inserting $v_0$ at slots $0$, $k_0 = 3$, and $2k_0 = 6$ yields 
\begin{equation}
    \pi = \{ v_0, v_1, v_2, v_0, v_3, v_1, v_0, v_2, v_4 \},
\end{equation}
which satisfies the original vector and matches~\eqref{eq:examplepi}.

In general, we may have to carry out several iterations of $IS$ before arriving at a schedulable vector. The vector $\boldsymbol{k}=(3,5,8,8,14,14)$ for example is not schedulable by $S_{xy}$, but regularizing with respect to $v_0$ yields $\boldsymbol{k^1} = (\infty,3,5,5,9,9)$. From the example above we know this is schedulable after regularizing again with respect to $v_1$. In the worst case, $IS$ must run for $M-2$ iterations. Testing schedulability at each iteration has complexity $O(M^2)$, yielding a total complexity of $O(M^3)$.

\begin{theorem}\label{th:isalgorithm}
    Denote the set of vectors schedulable by the Inductive Scheduling algorithm as $\mathcal{K}_{IS}$. Then $\mathcal{K}_{IS} \supset \mathcal{K}_{\mathcal{A}}$ for any known polynomial-time algorithm $\mathcal{A}$.
\end{theorem}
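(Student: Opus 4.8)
The plan is to prove the two halves of the statement separately: first the containment $\mathcal{K}_{IS} \supseteq \mathcal{K}_{\mathcal{A}}$ for every known polynomial-time algorithm $\mathcal{A}$, and then that it is proper. The containment half is essentially structural. By construction the first step ($j=0$) of $IS$ tests the original, sorted vector $\boldsymbol{k^0} = \boldsymbol{k}$ against a known schedulability condition with $S_{xy}$ as its subroutine, so if $\boldsymbol{k} \in \mathcal{K}_{S_{xy}}$ then $IS$ halts immediately with the schedule $S_{xy}$ produces; hence $\mathcal{K}_{S_{xy}} \subseteq \mathcal{K}_{IS}$. Since $S_{xy}$ is the state of the art and schedules every vector schedulable by any known polynomial-time algorithm~\cite{chan_general_1992}, i.e.\ $\mathcal{K}_{\mathcal{A}} \subseteq \mathcal{K}_{S_{xy}}$, chaining the two inclusions gives $\mathcal{K}_{\mathcal{A}} \subseteq \mathcal{K}_{IS}$.

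Before that is fully meaningful I would also verify that $IS$ is sound — that whenever it reports success the schedule it returns really satisfies $\boldsymbol{k}$ — by iterating Lemma~\ref{lemma:isinduction}. If $IS$ halts at iteration $J$ because $\boldsymbol{k^J}$ meets a known condition and $S_{xy}$ yields $\pi^J$, then applying Lemma~\ref{lemma:isinduction} with $j = J-1$ gives a schedule for $\boldsymbol{k^{J-1}}$ regular with respect to $v_{J-1}$; repeating for $j = J-2, \ldots, 1, 0$ produces $\pi^{J-2}, \ldots, \pi^0$ in turn, with $\pi^0$ meeting the inter-scheduling times of $\boldsymbol{k}$. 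Termination needs only the remark that each iteration hides one more coordinate as $\infty$, so after at most $M-2$ iterations two finite coordinates remain and $S_{xy}$ succeeds, unless the run has already aborted because $\rho(\boldsymbol{k^j})$ exceeded $1$.

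For properness it is enough to exhibit one vector in $\mathcal{K}_{IS} \setminus \mathcal{K}_{S_{xy}}$, because $\mathcal{K}_{\mathcal{A}} \subseteq \mathcal{K}_{S_{xy}}$. I would use $\boldsymbol{k} = (3,5,5,9,9)$. For membership in $\mathcal{K}_{IS}$: regularizing with respect to $v_0$ via~\eqref{eq:kevolution} gives $\boldsymbol{k^1} = (\infty,3,3,6,6)$, which has density $1$ and only two distinct finite values and is therefore scheduled by $S_{xy}$ (e.g.\ $\pi^1 = \{v_1,v_2,v_3,v_1,v_2,v_4\}$); Lemma~\ref{lemma:isinduction} then re-inserts $v_0$ at slots $0,3,6$ to yield the valid schedule~\eqref{eq:examplepi}. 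For non-membership in $\mathcal{K}_{S_{xy}}$: one must check that no dominated vector $\boldsymbol{k'} \leq \boldsymbol{k}$ with integers $x,y$ (each $k_i'$ a power of two times $x$ or $y$) satisfies~\eqref{eq:xyschedulability}; since the coordinates of $\boldsymbol{k}$ are small this is a finite enumeration over the admissible triples $(\boldsymbol{k'},x,y)$, and it confirms — consistent with~\cite{chan_general_1992} — that the left side of~\eqref{eq:xyschedulability} always strictly exceeds $1$. Thus $\boldsymbol{k} \in \mathcal{K}_{IS} \setminus \mathcal{K}_{\mathcal{A}}$ and the containment is proper.

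The genuinely hard step, Lemma~\ref{lemma:isinduction}, is already in hand, so the remaining work is bookkeeping: making sure each application of the lemma receives a sorted vector whose removed coordinate is already $\infty$, and that the ``regular with respect to $v_j$'' schedule produced at level $j$ is precisely the input the next application (for $v_{j-1}$) expects. I expect the only mildly tedious item to be the exhaustive verification that $S_{xy}$ fails on $(3,5,5,9,9)$; it is finite and poses no real obstacle.
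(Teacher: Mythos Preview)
Your proposal is correct and follows essentially the same approach as the paper: containment via the fact that $IS$ tests $S_{xy}$ at iteration $0$ combined with $\mathcal{K}_{\mathcal{A}} \subseteq \mathcal{K}_{S_{xy}}$, and properness via an explicit witness. The only cosmetic difference is the witness vector---the paper uses $\boldsymbol{k}=(3,5,8,8,8)$ rather than $(3,5,5,9,9)$---and the paper omits your (valid but inessential) soundness and termination remarks.
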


\begin{proof}    
    This follows from the fact that all previously known polynomial-time schedulable vectors are schedulable by $S_{xy}$,. Because $IS$ tests for $S_{xy}$ schedulability at each iteration, it must schedule at least all of $\mathcal{K}_{S_{xy}}$. It is a strict superset by the existence of the vector $\boldsymbol{k}=(3,5,8,8,8)$, which is schedulable by $IS$ and is not schedulable by $S_{xy}$. Therefore, there exist vectors in $\mathcal{K}_{IS}$ which are not in $\mathcal{K}_{\mathcal{A}}$, for any known algorithm $\mathcal{A}$.
\end{proof}

While it is difficult to characterize $\mathcal{K}_{IS}$ or provide general density guarantees beyond the $0.7$ guarantee given by $S_{xy}$, numerical results in Section~\ref{sec:simulations} show that $IS$ significantly outperforms $S_{xy}$ on over $1.4$ million randomly generated vectors with densities between $0.7$ and $1$. Most notably, the smallest density of any vector found to be outside of $\mathcal{K}_{IS}$ was $0.843$, compared to $0.774$ with $S_{xy}$. Recall that any vector with density up to $5/6$ is known to be schedulable, though not in polynomial time. \textit{Every vector} which met this bound was scheduled by $IS$, which leads us to the following conjecture. We cannot claim with certainty that the conjecture holds, but we have not found a counterexample.

\begin{conjecture}
    The $IS$ algorithm can schedule all vectors with densities up to $5/6$.
\end{conjecture}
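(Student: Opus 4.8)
The natural strategy is induction on the number $M$ of finite entries of $\boldsymbol{k}$, driven by Lemma~\ref{lemma:isinduction}, with inductive hypothesis ``every sorted vector with at most $M-1$ finite entries and density at most $\tfrac{5}{6}$ is scheduled by $IS$.'' The base case $M\le 2$ is immediate: a vector with at most two distinct finite values and density at most $\tfrac{5}{6}<1$ is scheduled by $S_{xy}$, since taking $x$ and $y$ equal to the two values (with no reduction) makes the left side of~\eqref{eq:xyschedulability} equal to $\rho(\boldsymbol{k})\le 1$. For the step, let $\boldsymbol{k}$ be sorted with $M\ge 3$ finite entries and $\rho(\boldsymbol{k})\le\tfrac{5}{6}$. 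If $\boldsymbol{k}$ is already $S_{xy}$-schedulable we are done; otherwise $IS$ regularizes with respect to the smallest entry $a=k_0$ to form $\boldsymbol{k^1}$ via~\eqref{eq:kevolution}, and since $x\mapsto x-\lceil x/a\rceil$ is nondecreasing, $\boldsymbol{k^1}$ is still sorted with $M-1$ finite entries. The whole proof then reduces to the \emph{density-preservation claim}: if $\boldsymbol{k}$ is not $S_{xy}$-schedulable and $\rho(\boldsymbol{k})\le\tfrac56$, then $\rho(\boldsymbol{k^1})\le\tfrac56$. Granting it, the inductive hypothesis gives an $IS$-schedule for $\boldsymbol{k^1}$, and Lemma~\ref{lemma:isinduction} with $j=0$ lifts it to a schedule for $\boldsymbol{k}$ --- which is precisely what $IS$ outputs.

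For the claim, I would first dispatch the generic regime by tracking densities. From $\lceil k_i/a\rceil\le(k_i+a-1)/a$ one gets $k_i^1=k_i-\lceil k_i/a\rceil\ge(a-1)(k_i-1)/a$, hence $\rho(\boldsymbol{k^1})\le\tfrac{a}{a-1}\sum_{i\ge1}\tfrac{1}{k_i-1}\le\bigl(\tfrac{a}{a-1}\bigr)^2\bigl(\tfrac56-\tfrac1a\bigr)$, which is already below $1$ for all $a\ge 5$. Conceptually, the amplification $k_i/k_i^1$ deviates from the ideal value $a/(a-1)$ only for the few entries with $k_i$ close to $a$, and since $\tfrac{a}{a-1}\bigl(\rho-\tfrac1a\bigr)\le\rho$ whenever $\rho\le1$, only those few entries can push the density up; the density budget limits how many there are. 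The refinement I would carry out is to keep the exact value of $\lceil k_i/a\rceil$ for the $O(1)$ smallest entries and use the crude bound only for the rest; I expect this yields $\rho(\boldsymbol{k^1})\le\tfrac56$ for all $a\ge 4$, and in fact whenever $\rho(\boldsymbol{k})$ is bounded slightly away from $\tfrac56$.

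The genuine obstacle is the boundary regime: $a\in\{2,3\}$ together with $\rho(\boldsymbol{k})$ right at the threshold. There the crude amplification bound blows past $1$, the density can genuinely jump under regularization (e.g.\ $(2,3)\mapsto(\infty,1)$), and one \emph{must} exploit that $\boldsymbol{k}$ is not $S_{xy}$-schedulable. My plan is to show that a density-$\le\tfrac56$ vector with $k_0\le 3$ is always $S_{xy}$-schedulable, so that regularization is never invoked in this regime: when $k_0=2$ every other entry is forced to be $\ge 4$ (else $\rho>\tfrac56$ unless $\boldsymbol{k}=(2,3)$), and I would argue that rounding each entry down to the nearest power of two and taking $x=1$ keeps the reduced density $\le1$; the case $k_0=3$ should yield to an analogous reduction to a round-robin-able or power-of-two-structured dominating vector. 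Making this precise amounts to bounding, for each dyadic interval $[2^m,2^{m+1})$, how many entries it can hold under the density constraint --- tedious but tractable. It is also conceivable that the density-preservation claim is false by a hair for large $a$ and must be replaced by a sharper, $k_0$-dependent invariant (a candidate is $\rho\le\tfrac56+\tfrac1{6k_0}$, which is preserved exactly under the ideal amplification and gives $\rho<1$ at termination). A complementary line is to check whether the $IS$ reduction already mirrors the construction behind the $\tfrac56$ guarantee of~\cite{kawamura_proof_2024}: if that construction amounts to ``regularize by the smallest task and recurse,'' the conjecture follows immediately. I expect the boundary analysis --- or identifying the right invariant, or reconciling with~\cite{kawamura_proof_2024} --- to be the hard part.
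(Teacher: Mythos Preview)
The statement is a \emph{conjecture}, not a theorem: the paper does not prove it. Its only support is empirical --- out of more than $1.4$ million randomly generated vectors, every one with density at most $5/6$ was scheduled by $IS$, and the authors explicitly write that they ``cannot claim with certainty that the conjecture holds.'' There is therefore no proof in the paper to compare your attempt against.

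Your plan is a reasonable line of attack, and you correctly locate the crux at the density-preservation claim. But the argument as written has a real gap. Your crude bound $\bigl(\tfrac{a}{a-1}\bigr)^{2}\bigl(\tfrac{5}{6}-\tfrac{1}{a}\bigr)$ never drops to $\tfrac{5}{6}$ for any finite $a$ (it approaches $\tfrac{5}{6}$ from above), so the ``generic regime'' computation only yields $\rho(\boldsymbol{k^1})<1$, which is not the inductive hypothesis you need. More seriously, \emph{unconditional} density preservation is simply false: $\boldsymbol{k}=(3,4,4)$ has $\rho=\tfrac{5}{6}$ exactly, yet regularizing by $k_0=3$ gives $\boldsymbol{k^1}=(\infty,2,2)$ with $\rho=1$. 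That particular vector is $S_{xy}$-schedulable at iteration $0$, so your conditional version (``if not $S_{xy}$-schedulable then density is preserved'') survives the example --- but this shows the non-$S_{xy}$-schedulability hypothesis is doing indispensable work, and exploiting it would require a structural description of what $S_{xy}$ fails to schedule inside $\mathcal{K}_{5/6}$, which is not available. Your boundary proposal (that every $\rho\le\tfrac{5}{6}$ vector with $k_0\in\{2,3\}$ is already $S_{xy}$-schedulable) is plausible but unproven, and the alternative invariant $\rho\le\tfrac{5}{6}+\tfrac{1}{6k_0}$ is an interesting idea but you would still need to show $IS$ terminates successfully under it, not merely that $\rho<1$ at each step. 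In short: the outline is a sensible research program, but the gap you flag is exactly the open problem the paper leaves unresolved.
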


In the next section, we will show how $IS$ can be used to solve the utility maximization problem $(P1)$. Before doing so, it is instructive to consider the role density plays in the optimization. Recall that the solution to $(P1)$ is truly optimal only as far as we can optimize over all feasible schedules. By scheduling a larger class of high-density vectors, $IS$ shrinks this optimality gap.

Assume that capacities are symmetric across each level of the tree, and that we have found a solution with schedulable $\boldsymbol{k_v}$ at each node $v$ that satisfies deadline constraints. All flows which are supported must pass through some node at level $d$, so $\sigma^* = \sum_{v \in V_d} \sigma_v^*$, for any $d$. Combining this with slice capacity and flow conservation constraints,
\begin{align}
\begin{aligned}
    \sigma^* &= \sum_{v \in V_d} \sigma_v^* = \sum_{v \in V_d} \sum_{v' \in \mathcal{C}_v} \sigma_{v'}^* \\
    &\leq \min_{0 \leq d < D} \sum_{v \in V_d} \sum_{v' \in \mathcal{C}_v} \Big \lfloor \frac{c_{e(v')}}{\lambda} \Big ( \frac{1}{k_{e(v')}} \Big) \Big \rfloor \\
    &\leq \min_{0 \leq d < D} \sum_{v \in V_d} \frac{c_{d+1}}{\lambda} \rho(\boldsymbol{k_v}),
\end{aligned}
\end{align}

This bound is not necessarily tight, but by scheduling vectors with larger densities, $IS$ increases this bound and helps ensure that utility is limited by the network itself and not by scheduling limitations. More concretely, there are at least some instances where the network is saturated up to level $d$ and this bound is tight (up to the floor). In these instances, $IS$ increases utility proportionally to its increase in scheduled density over $S_{xy}$. 

\section{General Utility Maximization}\label{sec:generalarrivals}

We finally turn to solving our utility maximization problem $(P1)$ by applying schedulability conditions from the $IS$ algorithm. This is achieved by introducing an additional variable into the optimization for each variable used in the algorithm, along with constraints describing the variables' relationships. While the $IS$ algorithm appears relatively straightforward, there are a number of challenges in this process that we will address in turn. Our goal is to express each constraint in a linear fashion, so that we are left with a tractable mixed-integer linear program (MILP). The derivation is quite tedious, so it can be skipped if the reader is not interested in the details. We present the full derivation in Appendix~\ref{app:milpderivation}.

While quite cumbersome, the derived variables and linear constraints, combined with the linear constraints already present in $(P1)$, fully define a MILP that maximizes $\sigma$ over all admission decisions and feasible $IS$ schedules. At each node $v$ in the formulation, there are $O(N_v^2)$ binary variables, one for each child being scheduled at each possible $IS$ iteration. Each variable has a fixed number of constraints, so there are likewise $O(N_v^2)$ linear constraints. In total, this gives us a MILP representation of $(P1)$ with $O\big( \sum_{v \in V_c} N_v^2 \big)$ binary variables and $O\big( \sum_{v \in V_c} N_v^2 \big)$ linear constraints, where $V_c$ is the set of vertices in the tree with children, i.e., all vertices except the leaves.

Modern mixed-integer solvers like Gurobi have become quite powerful at solving well-structured MILPs with several hundred binary variables, with typical solve times ranging from less than a second to a few minutes on a standard laptop. Worst-case complexity, however, is always exponential in the number of binary variables, and the solve times can also grow exponentially. As a result, solving this MILP directly is only practical on small problem instances, where $V_c$ cannot grow too large. 

We develop a scalable solution to the problem by decomposing the global optimization into a series of subproblems at each node $v$, where each subproblem retains its fixed $O(N_v^2)$ binary variables and $O(N_v^2)$ linear constraints. These subproblems are tractable for reasonable values of $N_v$, and the total solve time grows linearly, rather than exponentially, in the size of $V_c$. We explore this decomposition next.

\subsection{Distributed SLA Utility Maximization}

The bulk of the problem complexity lies in optimizing over the $IS$ schedulability constraints, which are independent between nodes under our local interference model. A close examination of $(P1)$ shows that the $\boldsymbol{k_v}$ vectors are coupled only through the deadline constraints and the flow conservation constraints. We will show that these can also be decoupled by solving for the maximum flow at each node independently.

Assume that deadline constraints are relaxed and consider flow conservation. The number of flows $\sigma_v^*$ which pass through node $v$ under a utility-maximizing policy is a function of the scheduling decisions along each leaf-to-root path that contains $v$, so this quantity is coupled between nodes. Now define the maximum number of flows supported by the subtree rooted at node $v$, independent of the rest of the tree, as
\begin{equation}
    \hat{\sigma}_v^* = \max_{\boldsymbol{k_v}} \sum_{v' \in \mathcal{C}_v} \sigma_{v'},
\end{equation}
subject to $\sigma_{v'} \leq \hat{\sigma}_{v'}^*$, for all $v' \in \mathcal{C}_v$, and slice capacity and $IS$ constraints. Note the distinction between this quantity and $\sigma_v^*$. The former represents an upper bound on the latter, and is independent of scheduling decisions above it in the tree.

Now consider the deadline constraints $\sum_{e \in T} k_e \leq \tau, \ \forall \ T \in \mathcal{T}$. At first glance, these constraints appear to pose a serious challenge to our decomposition because they directly couple the inter-scheduling times along each leaf-to-root path. The key observation is that only the sum $\sum_{v' \in V_v^+} k_{e(v')}$ affects $\hat{\sigma}_v^*$, where recall that $V_v^+$ denotes the set of node $v$'s ancestors, and the individual values of each $k_{e(v')}$ are irrelevant.

To see this, define the quantity
\begin{equation}\label{eq:tauvdef}
    \tau_v \triangleq \tau - k_{e(v)} - \sum_{v' \in V_v^+} k_{e(v')},
\end{equation}
which represents the effective deadline by which packets must reach node $v$. When $\tau_v$ is fixed, the deadline constraints can be rewritten as
\begin{equation}
    \sum_{v' \in T \cap V_v^-} k_{e(v')} \leq \tau_v, \ \forall \ T \in \mathcal{T},
\end{equation}
which shows that $\boldsymbol{k_v}$ only depends on the quantity $\tau_v$ and vectors $\boldsymbol{k_{v'}}$ below it in the tree.

Therefore, when $\tau_v$ is fixed, $\hat{\sigma}_v^*$ can be computed in the following way, using only knowledge of the scheduling decisions below node $v$ in the tree. We use the notation $\hat{\sigma}_v^*(\tau_v)$ to show the dependence on $\tau_v$.

\begin{lemma}\label{lemma:P2}
    Let $\tau_v$ and $\hat{\sigma}_{v'}^*(\tau_{v'})$ be fixed, for all $v' \in \mathcal{C}_v$ and all feasible $\tau_{v'}$. Then $\hat{\sigma}_v^*(\tau_v)$ is the solution to 
    \begin{align}
    \begin{aligned}\label{eq:distributedopt}
        (P2) : \max_{\boldsymbol{k_v}, \pi_v} \ &\sum_{v' \in \mathcal{C}_v} \sigma_{v'} \\ 
        \text{s.t.} \ &\sigma_{v'} \leq \hat{\sigma}_{v'}^*(\tau_v - k_{e(v')}), \ \forall \ v' \in \mathcal{C}_v, \\ 
        &\sigma_{v'} \lambda k_{e(v')} \leq c_{e(v')}, \ \forall \ v' \in \mathcal{C}_v, \\ 
        &\pi_v = IS(\boldsymbol{k_v}), \\ 
        &\sigma_{v'}, k_{e(v')} \in \mathbb{Z}_+, \ \forall \ v' \in \mathcal{C}_v.
    \end{aligned}
    \end{align}
\end{lemma}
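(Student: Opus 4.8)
The plan is to prove that the optimal objective value of $(P2)$ equals $\hat{\sigma}_v^*(\tau_v)$, by establishing the two inequalities separately. Here $\hat{\sigma}_v^*(\tau_v)$ is the $\tau_v$-constrained refinement of the recursive quantity defined just before the lemma: it is the largest number of flows routable through the subtree rooted at $v$ by a cyclic policy whose inter-scheduling times satisfy the slice-capacity and $IS$-schedulability constraints on every link of that subtree, with the added requirement that for every supported flow the sum of $k_{e(u)}$ over the links $e(u)$ on its route with $u \in V_v^-$ (call this its route-sum over $V_v^-$) is at most $\tau_v$, the deadline budget that \eqref{eq:tauvdef} leaves after removing $k_{e(v)}$ and the ancestors of $v$. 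The structural fact that drives everything, noted in the text preceding the lemma, is that a flow through child $v' \in \mathcal{C}_v$ traverses exactly the link $e(v')$ followed by a leaf-to-$v'$ path inside the subtree rooted at $v'$, so its route-sum over $V_v^-$ is $k_{e(v')}$ plus its route-sum over $V_{v'}^-$; hence the effective deadline it carries into that subtree is precisely $\tau_v - k_{e(v')}$.

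For $\hat{\sigma}_v^*(\tau_v) \ge \mathrm{opt}(P2)$, I would take an optimal $(P2)$ solution $(\boldsymbol{k_v}, \pi_v, \{\sigma_{v'}\})$. For each $v'$, the constraint $\sigma_{v'} \le \hat{\sigma}_{v'}^*(\tau_v - k_{e(v')})$ supplies a feasible cyclic configuration of the subtree rooted at $v'$ carrying $\sigma_{v'}$ flows under effective deadline $\tau_v - k_{e(v')}$. I then stitch these together with $\pi_v = IS(\boldsymbol{k_v})$ used for the links $\{e(v') : v' \in \mathcal{C}_v\}$. Because interference is local (only children of a common parent contend), the schedules below distinct children never conflict, and $\pi_v$ is a valid schedule for $\boldsymbol{k_v}$ by construction of $IS$; the result is a cyclic policy for the subtree at $v$ whose inter-scheduling-time vector agrees with $\boldsymbol{k_v}$ on the top links and with the children's vectors below. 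Now invoke Lemma~\ref{lemma:suffconds}: with slice widths set by \eqref{eq:slicewidthcond}, the $(P2)$ slice-capacity constraints are exactly \eqref{eq:slicecapcond} on each link, and each supported flow has route-sum over $V_v^-$ equal to $k_{e(v')}$ plus a quantity $\le \tau_v - k_{e(v')}$, hence $\le \tau_v$, which is precisely the deadline condition attached to $v$. So the stitched configuration is feasible for the subtree at $v$ and carries $\sum_{v'} \sigma_{v'}$ flows, giving $\hat{\sigma}_v^*(\tau_v) \ge \mathrm{opt}(P2)$.

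For the reverse inequality, I would take a subtree configuration at $v$ achieving $\hat{\sigma}_v^*(\tau_v)$, let $\boldsymbol{k_v}$ be its inter-scheduling times on $\{e(v')\}$, and let $\sigma_{v'}$ be the number of flows it routes through $v'$. Then $\boldsymbol{k_v}$ is $IS$-schedulable because $\hat{\sigma}_v^*$ is defined only over $IS$-schedulable vectors, so $\pi_v = IS(\boldsymbol{k_v})$ is well defined; the slice-capacity constraint $\sigma_{v'}\lambda k_{e(v')} \le c_{e(v')}$ holds by assumption; and restricting the configuration to the subtree rooted at $v'$ gives a feasible configuration whose flows all have route-sum over $V_{v'}^-$ equal to their route-sum over $V_v^-$ minus $k_{e(v')}$, which is at most $\tau_v - k_{e(v')}$; hence $\sigma_{v'} \le \hat{\sigma}_{v'}^*(\tau_v - k_{e(v')})$ by definition of $\hat{\sigma}_{v'}^*$. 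Thus $(\boldsymbol{k_v}, \pi_v, \{\sigma_{v'}\})$ is feasible for $(P2)$ with objective $\sum_{v'} \sigma_{v'} = \hat{\sigma}_v^*(\tau_v)$, so $\mathrm{opt}(P2) \ge \hat{\sigma}_v^*(\tau_v)$; combining the two directions proves the lemma.

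The step I expect to be the main obstacle is the stitching argument: one must argue that cyclic schedules built independently at different depths compose into a single globally valid cyclic policy with no hidden coupling. The resolution is that under the sibling-only interference model, feasibility of the combination reduces to (i) no two siblings being active in the same slot, which holds link-by-link by construction, and (ii) the inter-scheduling-time conditions of Lemma~\ref{lemma:suffconds}, which are phase-independent and so survive the composition unchanged. A secondary point to dispatch is the boundary behaviour: one should confirm that $\hat{\sigma}_{v'}^*(\cdot)$ is well defined and equal to $0$ when $\tau_v - k_{e(v')}$ is too small to admit any feasible sub-configuration, and that the recursion terminates correctly at the level-$D$ devices, where $\hat{\sigma}_{v'}^*(\tau_{v'})$ equals $1$ for a nonnegative effective deadline and the subtree schedule is trivial.
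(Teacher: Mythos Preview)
Your proposal is correct and follows essentially the same approach as the paper: both hinge on the identity $\tau_{v'} = \tau_v - k_{e(v')}$ derived from \eqref{eq:tauvdef}, together with the observation that slice-capacity and $IS$-schedulability constraints are local to each node under the sibling-only interference model. The paper's own proof is a brief sketch that states this recursion and then asserts optimality, whereas you spell out the two-inequality argument (stitching child configurations for one direction, restricting to subtrees for the other) and explicitly address the composition of independent cyclic schedules; this added rigor is welcome but does not constitute a different route.
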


\begin{proof}
    From~\eqref{eq:tauvdef},
    \begin{equation}
        \tau_{v'} = \tau - k_{e(v')} - \sum_{v'' \in V_{v'}^+} k_{e(v'')} = \tau_v - k_{e(v')},
    \end{equation} 
    so for a fixed $\tau_v$ and scheduling vector $\boldsymbol{k_v}$, the maximum number of flows supported at $v'$ is $\hat{\sigma}_{v'}^*(\tau_v - k_{e(v')})$. This recursively handles the deadline constraints through the value of $\tau_v - k_{e(v')}$. The remaining constraints handle slice capacity and schedulability, so maximizing over $\boldsymbol{k_v}$ yields the optimal solution for a fixed $\tau_v$.
\end{proof}

Note the differences between $(P1)$ and $(P2)$. While $(P1)$ is a global optimization that solves for $\sigma_v^*$ simultaneously at each $v$, $(P2)$ is a local optimization, which solves for a particular $\hat{\sigma}_v^*$. The slice capacity and $IS$ constraints are the same because these constraints are decoupled between nodes, while the deadline constraints from $(P1)$ are implicitly present in the $(P2)$ flow conservation constraints $\sigma_{v'} \leq \hat{\sigma}_{v'}^*(\tau_v - k_{e(v')}), \ \forall \ v' \in \mathcal{C}_v$. To see this, observe that when $\tau_v$ and $k_{e(v')}$ are fixed, $\tau_{v'} = \tau_v - k_{e(v')}$, and the constraint becomes $\sigma_{v'} \leq \hat{\sigma}_{v'}^*(\tau_{v'})$, which enforces both flow conservation and deadlines.

Solving $(P2)$ at node $v$ requires knowledge of $\hat{\sigma}_{v'}^*$ at its children, which naturally lends itself to an algorithm which begins at the APs and iterates up the tree until reaching the root. However, because $\tau_v$ depends on the scheduling decisions at nodes \textit{above} $v$ in the tree, the algorithm cannot know the optimal value $\tau_v^*$ a priori. Instead, it solves $(P2)$ once for each feasible value of $\tau_v$, and returns the set
\begin{equation}
    \{ \hat{\sigma}_v^*(\tau_v), \ \forall \ D - d(v) \leq \tau_v \leq \tau - d(v) \}
\end{equation}
to its parent, which forms the upper bound in the flow conservation constraint at its parent node. This encompasses all feasible values of $\tau_v$ because $k_{e(v')} \geq 1$ at each $v' \in V$, so the algorithm must solve $(P2)$ a total of $\tau-D$ times at each node.

Once the algorithm has solved $(P2)$ at the root, it propagates the optimal solution back down the tree by iteratively finding $\hat{\sigma}_v^*(\tau_v^*)$ from the pre-computed values at node $v$, and then setting $\tau_{v'}^* = \tau_v^* - k_{e(v')}^*$ for each $v' \in \mathcal{C}_v$. 

While $(P2)$ contains the same non-linear and non-convex constraints as $(P1)$, they can be linearized in exactly the same way, resulting in a MILP with $O(N_v^2)$ binary variables and $O(N_v^2)$ linear constraints. The algorithm then solves at most $(\tau-D) V_c$ iterations of this linearized version to reach the global optimum. We can reduce the dependency on $V_c$ even further, however, by solving in a distributed fashion. Because each subproblem only depends on values below it in the tree, we can parallelize the computations across each level, resulting in a total computation time that is linear in the tree depth $D$ rather than $V_c$. We describe the full Distributed SLA Utility Maximization ($DSUM$) algorithm in Algorithm~\ref{alg:distributedopt}.

\begin{algorithm}
    \DontPrintSemicolon
    %
    \SetKwInput{Input}{Input}\SetKwInOut{Output}{Output}
    \Input{Flow arrivals $N_{D,v}, \ \forall \ v \in V_{D-1}$, rate $\lambda$, deadline $\tau$}
    \Output{Maximum utility $\hat{\sigma}^*$, optimal policy $\pi^*$}
    \For{$v \in V_{D-1}$}{
        \For{$\tau_v \in [1,\tau-D+1]$}{
            Set $\hat{\sigma}_v^*(\tau_v) = \min \big \{ N_{D,v}, \tau_v, \big \lfloor \frac{c_D}{\lambda} \big \rfloor \big \}$
        }
    }
    $d = D-2$ \\
    \While{$d > 0$}{
        \For{parallel $v \in V_d$}{
            \For{$\tau_v \in [1,\tau-d]$}{
                Set $\hat{\sigma}_v^*(\tau_v), \boldsymbol{k_v^*}(\tau_v)$ from $(P2)$
            }
        }
        Decrement $d$ \\
    }
    Set $\hat{\sigma}_{root}^*(\tau)$ from $(P2)$ \\
    $\tau_{root}^* = \tau$ \\
    \While{$d < D$}{
        \For{parallel $v \in V_d$}{
            \For{$v' \in \mathcal{C}_v$}{
                $k_{e(v')}^* = k_{e(v')}^*(\tau_v^*)$ \\
                $\tau_{v'}^* = \tau_v^* - k_{e(v')}^*$ \\
            }
            $\pi_v = IS(\boldsymbol{k_v})$ \\
        }
        Increment $d$ \\
    }
    Return $(\hat{\sigma}_{root}^*, \pi^* = \{ \pi_v, \ \forall \ v \in V_c \})$ \\   
    
    \caption{Distributed SLA Utility Maximization (DSUM)}
    \label{alg:distributedopt}
\end{algorithm}

\begin{theorem}\label{th:dsum}
    $DSUM$ returns the solution to $(P1)$, maximizing utility over all admission decisions and $IS$ scheduling policies. It does this by setting 
    \begin{equation}
        \hat{\sigma}_v^*(\tau_v) = \min \big \{ N_{D,v}, \tau_v, \big \lfloor \frac{c_D}{\lambda} \big \rfloor \big \}
    \end{equation}
    at each AP $v$, and for all feasible $\tau_v$, and then solving at most $\tau (D-1)$ sequential batches of $(P2)$, where each batch consists of parallel $(P2)$ computations across nodes on one level of the tree.
\end{theorem}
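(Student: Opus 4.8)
I would prove the theorem as a dynamic program over the levels of the tree: a bottom‑up pass establishing that every value $\hat\sigma_v^*(\tau_v)$ computed by $DSUM$ is correct, a top‑down traceback exhibiting an optimal feasible policy, and a bookkeeping step for the iteration count. Lemma~\ref{lemma:P2} already carries out the per‑node work, so the only genuinely new piece on the value side is the base case. For an AP $v\in V_{D-1}$ with budget $\tau_v$, its children are the individual devices at level $D$, each carrying a single flow on a link of capacity $c_D$ and all interfering at $v$. If $\sigma$ devices are served, the vector $\boldsymbol{k_v}$ has length $\sigma$ and must satisfy $\rho(\boldsymbol{k_v})\le 1$, while the deadline constraint (a packet needs at most $k_{e(v')}$ slots to reach $v$) and the slice‑capacity constraint each cap every entry of $\boldsymbol{k_v}$ by $B:=\min\{\tau_v,\lfloor c_D/\lambda\rfloor\}$; hence $\sigma/B\le\rho(\boldsymbol{k_v})\le 1$, giving $\sigma\le\min\{N_{D,v},B\}=\min\{N_{D,v},\tau_v,\lfloor c_D/\lambda\rfloor\}$, and this bound is attained by serving that many devices round‑robin (each $k_{e(v')}=\sigma\le B$, a density‑one and hence $IS$‑schedulable vector). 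So $\hat\sigma_v^*(\tau_v)$ equals exactly the quantity $DSUM$ assigns, and it suffices to record it for the feasible range of $\tau_v$, which the algorithm's interval over‑covers since every link on the root path contributes at least one slot to the budget.

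\textbf{Induction up the tree and the value of $(P1)$.} Assuming $DSUM$ has correctly computed $\hat\sigma_{v'}^*(\cdot)$ at every child of $v$, Lemma~\ref{lemma:P2} gives that solving $(P2)$ at $v$ for each feasible $\tau_v$ returns $\hat\sigma_v^*(\tau_v)$; iterating from level $D-2$ down to the root (where $\tau_{root}=\tau$, since the root has no ancestor and no outgoing link) yields $\hat\sigma_{root}^*(\tau)$. I would then argue that $\hat\sigma_{root}^*(\tau)$ equals the optimum of $(P1)$ over $IS$‑schedulable policies. The inequality ``$\le$'' follows by taking any $(P1)$‑feasible policy $\pi$, reading off the induced budgets $\tau_v^\pi=\tau-\sum_{v''\in\{v\}\cup V_v^+}k_{e(v'')}^\pi$ (well defined because the root path through $v$ is unique), and showing bottom‑up that $\sigma_v^\pi\le\hat\sigma_v^*(\tau_v^\pi)$: the AP base case is immediate from the three constraints, and the inductive step is that $\pi$'s local data $\big(\boldsymbol{k_v}^\pi,\{\sigma_{v'}^\pi\}\big)$ is feasible for $(P2)$ at $v$ with budget $\tau_v^\pi$. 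The reverse inequality ``$\ge$'' is witnessed by the traceback below.

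\textbf{Traceback, feasibility, and count.} Starting from $\tau_{root}^*=\tau$ and the $(P2)$‑maximizer $\boldsymbol{k_v^*}$ at each node, set $\tau_{v'}^*=\tau_v^*-k_{e(v')}^*$ recursively down to the APs and emit $\pi_v=IS(\boldsymbol{k_v^*})$. Telescoping $\tau^*$ along any admitted route $\mathcal{T}^{(i)}$ gives $\sum_{e\in\mathcal{T}^{(i)}}k_e^*=\tau-\tau_{v'}^*\le\tau$, so~\eqref{eq:deadlinecond} holds; \eqref{subeq:slicecap} holds because every $(P2)$ instance and the base case enforce it; and each $\boldsymbol{k_v^*}$ comes with the explicit schedule $\pi_v$. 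Composing the per‑node schedules into a cyclic policy (period the lcm of the $\pi_v$) and setting $w_{i,e}=\lambda k_e^*$ as in~\eqref{eq:slicewidthcond}, Lemma~\ref{lemma:suffconds} certifies that every admitted SLA is met, while flow conservation makes the admitted count telescope up to $\hat\sigma_{root}^*(\tau)$, the optimum of $(P1)$. For the count: the upward pass calls $(P2)$ once per pair $(v,\tau_v)$; nodes on one level have disjoint subtrees and run in parallel, so level $d$ is a single sequential block of at most $\tau-d$ batches and the root adds one more, totalling $\sum_{d=0}^{D-2}(\tau-d)\le\tau(D-1)$ sequential batches, with the downward pass contributing only $O(D)$ further cheap parallel rounds of lookups and $IS$ calls.

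\textbf{Main obstacle.} I expect the delicate step to be the ``$\le$'' direction of the second stage: showing that the single scalar $\tau_v$ absorbs \emph{all} coupling that the path deadline constraints impose on the subtree rooted at $v$, so that an arbitrary global $(P1)$ solution decomposes cleanly into simultaneously $(P2)$‑feasible per‑node pieces. This is exactly where the tree structure (uniqueness of the root path through each $v$) and the specific form~\eqref{eq:tauvdef} of $\tau_v$ do the work; everything else reduces to citing Lemma~\ref{lemma:P2} and Lemma~\ref{lemma:suffconds}, an elementary density/round‑robin argument (base case), or routine bookkeeping.
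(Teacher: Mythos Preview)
Your approach matches the paper's: establish the AP base case (you use a density bound $\sigma/B\le\rho(\boldsymbol{k_v})\le 1$, the paper directly simplifies $(P2)$ at an AP), invoke Lemma~\ref{lemma:P2} inductively up the tree, and bound the sequential batches the same way; you are simply more explicit than the paper about the two-direction equality with $(P1)$ and about the traceback's feasibility. One cosmetic slip: ``density-one and hence $IS$-schedulable'' is false as a general implication, but it is harmless here since the specific vector is constant and therefore round-robin (hence $IS$) schedulable.
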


\begin{proof}
    From Lemma~\ref{lemma:P2}, if $\hat{\sigma}_{v'}^*(\tau_{v'})$ is known for all $v' \in \mathcal{C}_v$ and feasible $\tau_{v'}$, then $(P2)$ returns $\hat{\sigma}_v^*(\tau_v)$ for a fixed $\tau_v$. Therefore, by induction, we can solve $\hat{\sigma}_{root}^*(\tau)$ given an initial solution at the APs. 

    At each AP $v$, there are $N_{D,v}$ customers, each requesting an SLA for a single flow. Therefore, $\hat{\sigma}_{v'}^*(\tau_v - k_{e(v')}) = 1$ if $\tau_v \geq k_{e(v')}$ and $0$ otherwise, for each customer $v'$. Then $(P2)$ simplifies to 
    \begin{align}
    \begin{aligned}\label{eq:distributedopt}
        \max_{\boldsymbol{k_v}} \ &\sum_{v' = 0}^{N_{D,v} - 1} \sigma_{v'} \\ 
        \text{s.t.} \ &\sigma_{v'} = \mathbbm{1}(\tau_v \geq k_{e(v')}), \ \forall \ v' \in \mathcal{C}_v, \\ 
        &\sigma_{v'} \lambda k_{e(v')} \leq c_{e(v')}, \ \forall \ v' \in \mathcal{C}_v, \\ 
        &\boldsymbol{k_v} \ \text{schedulable by} \ IS, \\
        &k_{e(v')} \in \mathbb{Z}_+, \ \forall \ v' \in \mathcal{C}_v.
    \end{aligned}
    \end{align}
    Combining the first and second constraints yields
    \begin{equation}
        \sigma_{v'} = \mathbbm{1} \Big( k_{e(v')} \leq \min \Big \{ \tau_v, \frac{c_{e(v')}}{\lambda} \Big \} \Big ),
    \end{equation}
    and because $c_{e(v')} = c_D$ across all $v'$, this constraint is symmetric. $IS$ can always schedule a vector with $k_{e(v')}$ elements equal to $k_{e(v')}$ using a round-robin schedule, so setting 
    \begin{equation}
        k_{e(v')} = \min \Big \{ \tau_v, \Big \lfloor \frac{c_D}{\lambda} \Big \rfloor \Big \}
    \end{equation}
    yields $\hat{\sigma}_v^* = \min \{ \tau_v, \lfloor c_D / \lambda \rfloor \}$, provided there are enough customers at the AP. If $N_{D,v}$ is less than this value, then the same policy can support $N_{D,v}$ flows, so the solution at any AP is given by 
    \begin{equation}
        \hat{\sigma}_v^*(\tau_v) = \min \Big \{ N_{D,v}, \tau_v, \Big \lfloor \frac{c_D}{\lambda} \Big \rfloor \Big \}.
    \end{equation}

    The optimal solution can be found in $\tau (D-1)$ sequential batches because each node on the same level can solve $(P2)$ in parallel and each node must solve a maximum of $\tau$ iterations. The solution at the APs exists in closed form so we need only solve at the $D-1$ levels above it. This completes the proof.
\end{proof}

The $DSUM$ algorithm reduces the computation from solving a MILP with $O(\sum_{v \in V_c} N_v^2)$ binary variables to solving at most $\tau (D-1)$ batches of parallel MILPs with $O(N_v^2)$ binary variables each. In the next section, we will show numerical results including computation times, but in general $DSUM$ can be solved on the order of minutes for reasonable values of $N_v$. As such, $DSUM$ offers a practical and provably near-optimal solution to maximizing utility with long-lasting SLAs. 

\section{Numerical Results}\label{sec:simulations}

\subsection{IS Algorithm}

In this section we present numerical results highlighting the performance of the $IS$ and $DSUM$ algorithms. To quantify the improvement of the $IS$ algorithm over $S_{xy}$, we randomly generated a large number of vectors and tested the performance of both algorithms on each one. These vectors varied in length from $4$ to $20$ elements. For each length $M$, we generated $100,000$ unique vectors with densities between $0.7$ and $1$ using the following method\footnote{For vector lengths between $4$ and $7$, we were unable to find this many unique vectors, so the search was truncated for each length after $100,000$ consecutive failed attempts.}. First, each vector element was generated from a uniform distribution in the range $[2,3M-1]$. If the density of this vector was outside of the range $(0.7,1]$, or if the sorted vector had been seen before, then it was discarded and the process was repeated until a new vector with proper density was found. Next, this vector was given to both the $S_{xy}$ and $IS$ algorithms to determine if it was schedulable.

The average success rate of both algorithms is shown on the left of Figure~\ref{fig:is-sxy-comp}. For $M=4$, $IS$ did not present any improvement over $S_{xy}$, but starting at length $5$ the performance gap widened, and for $M \geq 8$, $IS$ consistently scheduled from $19\%$ to $24\%$ more vectors than $S_{xy}$.

On the right of Figure~\ref{fig:is-sxy-comp} we show the minimum density which was not schedulable for each vector length, and see that for $M \geq 7$, the minimum $IS$ density was consistently $0.05$ to $0.07$ higher than $S_{xy}$. In addition, out of over $1.4$ million vectors that were tested, the smallest density which was not schedulable by $IS$ was $0.834$. This supports our conjecture that $IS$ can schedule all vectors with density up to $5/6$ (approximatley $0.833$), which recall is the largest possible density bound. To further support this conjecture, we repeated the experiment, this time constricting density to be in the range $(0.7,0.83]$, and we generated over $1.3$ million unique vectors in this experiment. As expected, $IS$ was able to schedule every one of these vectors.

\begin{figure}
    \centering
    \includegraphics[width=0.48\textwidth]{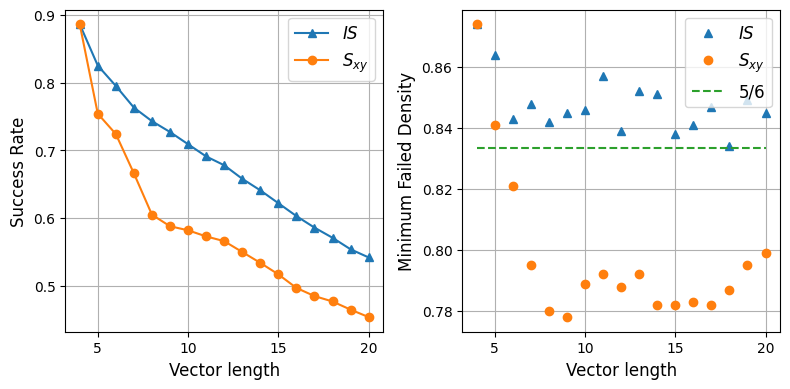}
    \caption{Success rate and minimum unschedulable density of the $IS$ and $S_{xy}$ algorithms}
    \label{fig:is-sxy-comp}
\end{figure}

\subsection{DSUM Algorithm}

We tested the performance of the $DSUM$ algorithm on various trees of depth $3$, while varying the link capacities, the number of children (i.e., the degree) at each node, and the number of flows at each AP. For the purposes of these simulations, we set $\lambda = 10$ Mbps for each flow, and the duration of a time slot as $125 \ \mu s$, the smallest time slot used for data in the 5G standard~\cite{3gpp-ts-138211}. Note that both $\lambda$ and the time slot duration can be scaled without changing the results, provided the link capacities are scaled with $\lambda$.

\begin{figure}
    \centering
    \includegraphics[width=0.48\textwidth]{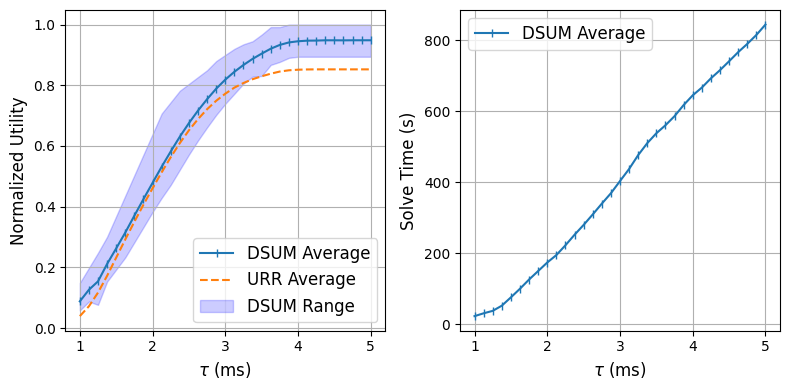}
    \caption{Normalized utility and solve times for the $DSUM$ algorithm on trees with depth $3$ and varying degrees from $2$ to $6$, with $400$ flows requesting SLAs}
    \label{fig:dsum-perf}
\end{figure}

For each experiment, we constructed a depth $3$ tree with a random number of children at each node, uniformly distributed on the interval $[2,6]$. The uplink capacity from each user to its AP was fixed at $250$ Mbps, and the maximum capacities of the level-$2$ and level-$1$ links were fixed at $1$ Gbps and $4$ Gbps respectively. When each tree was constructed, the actual capacity of each level-$2$ and level-$1$ link was drawn uniformly from the range $[C_{max} /2, C_{max}]$, where $C_{max}$ represents the maximum capacity for the link as described above. We fixed the number of flows requesting SLAs at $400$, and randomly distributed them at the APs following a uniform distribution.

Once each tree was constructed, we ran the $DSUM$ algorithm on the tree and observed the utility the algorithm was able to achieve, as well as the solve time of the algorithm. We also tested the performance of the $URR$ algorithm with greedy pruning. The results of $50$ such experiments are shown in Figure~\ref{fig:dsum-perf} as a function of $\tau$. The left side of the figure shows the utility averaged over each experiment, as well as the maximum and minimum values, with all utilities normalized by the tree capacity. Note that the capacity is an upper bound on utility, and may not be achievable for bounded deadlines and schedule length. Nevertheless, we observe that the average $DSUM$ performance is approximately $0.95$ of this bound as $\tau$ becomes large, and achieves the bound with equality in some instances.

In general, we observe an inflection point at a certain value of $\tau$, which splits the curve into two regions. For small values of $\tau$, deadline constraints dominate, and utility increases roughly linearly with $\tau$. Once a certain value of $\tau$ is reached, capacity constraints begin to dominate, and the curve flattens. This also explains the performance of the $URR$ policy, plotted alongside $DSUM$. Recall that $URR$ is optimal under perfect symmetry conditions. When $\tau$ is small, we must prune a significant number of branches from the tree in order to satisfy deadlines, and as discussed in the context of the Greedy Pruning algorithm, pruning at a node with larger degree has a smaller average impact on utility. Therefore, pruning the tree optimally tends to drive the tree to a more symmetric state, where $URR$ is close to optimal. As $\tau$ reaches the capacity-dominated region, $DSUM$ exploits this by deviating from $URR$ and achieving approximately $10\%$ higher utility.

The right side of Figure~\ref{fig:dsum-perf} shows the average solve times for $DSUM$ as a function of $\tau$, averaged over the same set of experiments. These experiments were run on a standard laptop with an Intel Core i7-1365U processor. Each subproblem was solved sequentially and the maximum solve time at each level of the tree was summed to compute the total distributed solve time, which is shown in the figure. The solve times increase roughly linearly with $\tau$ as expected, because each node solves approximately $\tau$ versions of $(P2)$. We observe that solve times for the complete algorithm are on the order of minutes for depth $3$ trees with degree up to $6$ and $400$ SLAs.

\section{Conclusion}\label{sec:conclusion}

In this paper, we studied the problem of maximizing utility from SLAs, defined in terms of instantaneous throughput and hard packet deadlines. We showed that under perfect symmetry conditions, a URR policy is optimal for maximizing utility, but that in the general case a more complete framework is necessary. We introduced a novel pinwheel scheduling algorithm, which significantly outperforms the state of the art. Using conditions from this algorithm, we formulated a MILP to solve the utility maximization problem, and developed a scalable distributed algorithm to solve this MILP, with solve times that grow linearly in the depth of the tree. Future extensions of this work involve extending the $IS$ and $DSUM$ algorithms to support multiple simultaneous transmissions, which is applicable for technologies like OFDM with multiple subcarriers, and supporting non-batch flow arrivals.

\appendix

\subsection{Derivation of the MILP (P1) equivalent}\label{app:milpderivation}

For compactness, we will use vector and matrix notation in this section, denoting vectors in bold and matrices with capital letters. In addition, we make use of the notation $\mathbf{1_M}$ where possible to denote a vector of ones with length $M$. Before introducing the $IS$ constraints, we first address the nononvex slice capacity constraint~\eqref{subeq:slicecap}, which has a bilinear quadratic term. To linearize this constraint, we employ the well-known McCormick envelope technique~\cite{mccormick1976computability}, which introduces additional variables and constraints to linearize bilinear terms when at least one of the variables is binary. We will employ this technique several times going forward. Let $\boldsymbol{\kappa_e}$ be the binary representation of $k_e$, such that
\begin{equation}\label{eq:binaryk}
    k_e = \sum_{l=0}^{\lfloor \log k_{\max} \rfloor} 2^l \kappa_{el}.
\end{equation}
Then the constraint can be rewritten as 
\begin{equation}\label{eq:slicecapbin}
    \lambda \sigma_v \sum_{l=0}^{\lfloor \log k_{\max} \rfloor} 2^l \kappa_{e(v),l} \leq c_{e(v)},
\end{equation}
where the left hand side is the sum of products of an integer with a binary variable.

We can now apply the envelope technique, which uses upper and lower bounds on the non-binary term $\lambda \sigma_v$. Clearly a lower bound is $0$ and an upper bound is $c_{e(v)}$, because $k_{e(v)}$ cannot be less than $1$. Define $\gamma_{vl} \triangleq \lambda \sigma_v \kappa_{e(v),l}$, which is equivalent to the set of linear constraints
\begin{align}
\begin{aligned}
    &\gamma_{vl} \geq 0, \\
    &\gamma_{vl} \leq \lambda \sigma_v, \\
    &\gamma_{vl} \leq \kappa_{e(v),l} c_{e(v)}, \\
    &\gamma_{vl} \geq \lambda \sigma_v - c_{e(v)} (1 - \kappa_{e(v),l}).
\end{aligned}
\end{align}
Rewriting~\eqref{eq:slicecapbin} in terms of $\gamma_{vl}$, we have 
\begin{equation}
    \sum_{l=0}^{\lfloor \log k_{\max} \rfloor} 2^l \gamma_{vl} \leq c_{e(v)},
\end{equation}
which is now linear and equivalent to~\eqref{subeq:slicecap}.

Moving on to the $IS$ algorithm, the first difficulty we must address is that $IS$ requires $\boldsymbol{k}$ to be sorted, and we do not know this order a priori before solving $(P1)$. Therefore, whatever order we assign is not necessarily the order the algorithm requires. Recall that we denote the initial sorted vector as $\boldsymbol{k^0}$, so that $\boldsymbol{k^0} = \Phi \boldsymbol{k}$, where $\Phi$ is a permutation matrix (i.e., each row and each column contains exactly one $1$ and the remaining elements are $0$). In other words, if $\Phi_{ij}=1$, then $k_i^0 = k_j$, and the matrix represents a mapping between the vectors. Expanding the expression above,
\begin{equation}\label{eq:kpermutation}
    k_i^0 = \sum_{j=0}^{M-1} \Phi_{ij} k_j, \ \forall \ 0 \leq i < M,
\end{equation}
which is again the sum of bilinear terms with one binary factor $\Phi_{ij}$.

Similar to above, define $\zeta_{ij} \triangleq \Phi_{ij} k_j$, and replace this equality with the equivalent linear inequalities
\begin{align}
\begin{aligned}
    &\zeta_{ij} \geq 0, \\
    &\zeta_{ij} \leq k_j, \\
    &\zeta_{ij} \leq \Phi_{ij} k_{\max}, \\
    &\zeta_{ij} \geq k_j - k_{\max} (1 - \Phi_{ij}).
\end{aligned}
\end{align}
Replacing the bilinear term in~\eqref{eq:kpermutation} with $\zeta_{ij}$, and combining with 
\begin{align}
\begin{aligned}
    &\Phi \mathbf{1_M} = \mathbf{1_M}, \\
    &\Phi^T \mathbf{1_M} = \mathbf{1_M}, \\
    &k_i^0 \leq k_{i+1}^0, \ \forall \ 0 \leq i < M-1
\end{aligned}
\end{align}
gives us a linear mapping from $\boldsymbol{k}$ to the sorted vector $\boldsymbol{k^0}$.

Next we consider the evolution equation~\eqref{eq:kevolution}, and note that it is again both nonlinear and nonconvex. To linearize this expression, we define the variable $r_{ij} \triangleq \lceil k_i^j / k_j^j \rceil$, which is equivalent to 
\begin{equation}\label{eq:kevolratio}
    r_{ij}^j k_j^j \geq k_i^j, \ r_{ij} \in \mathbb{Z}.
\end{equation}
This formulation eliminates the ceiling operation, but is still nonconvex due to the bilinear quadratic term. Analogous to the binary representation of $k_e$ above, let $\boldsymbol{\kappa^j_i}$ be the binary representation of $k_i^j$, such that
\begin{equation}
    k_i^j = \sum_{l=0}^{\lfloor \log k_{\max} \rfloor} 2^l \kappa_{il}^j.
\end{equation}
Then~\eqref{eq:kevolratio} can be rewritten as 
\begin{equation}\label{eq:rkappaexp}
    r_{ij}^j \sum_{l=0}^{\lfloor \log k_{\max} \rfloor} 2^l \kappa_{jl}^j \geq k_i^j,
\end{equation}
where the left hand side is again the sum of products of an integer with a binary variable.

Again applyling the envelope technique, define $\xi_{ijl}^j \triangleq r_{ij}^j \kappa_{jl}^j$ and recall that $r_{ij}^j$ represents the ratio of two positive values of $k_i^j$, so it is upper bounded by $k_{\max}$ and lower bounded by zero. Then we have the familiar set of equivalent linear constraints
\begin{align}
\begin{aligned}
    &\xi_{ijl}^j \geq 0, \\
    &\xi_{ijl}^j \leq r_{ij}^j, \\
    &\xi_{ijl}^j \leq \kappa_{jl}^j k_{\max}, \\
    &\xi_{ijl}^j \geq r_{ij}^j - k_{\max} (1 - \kappa_{jl}^j).
\end{aligned}
\end{align}
Replacing the bilinear term in~\eqref{eq:rkappaexp} with $\xi_{ijl}^j$ yields a set of linear constraints equivalent to~\eqref{eq:kevolratio} for all $i > j$ at iteration $j$. The evolution equation is then represented by the linear expression 
\begin{equation}
    k_i^{j+1} = k_i^j - r_{ij}^j,
\end{equation}
for all $i > j$, and infinity otherwise.

The last remaining piece to incorporate is the $S_{xy}$ schedulability constraints~\eqref{eq:xyschedulability} at each iteration of the $IS$ algorithm. Using the familiar superscript, we denote $x^j$ and $y^j$ as the integers $x$ and $y$ respectively at iteration $j$. Furthermore, we denote their binary expansions as $\boldsymbol{\chi^j}$ and $\boldsymbol{\psi^j}$ respectively. Recall that at iteration $j$, $k_i^j$ must be a power of $2$ multiple of either $x^j$ or $y^j$ for all $i \geq j$ (all prior elements have been removed from the vector at this iteration). If $k_i^j$ is a multiple of $x^j$, we say it is normalized with respect to $x^j$, and otherwise that it is normalized with respect to $y^j$.

Denote the vector of values in $\boldsymbol{k^j}$ that are normalized with respect to $x^j$ as $\boldsymbol{k_x^j}$, and let the density of this vector be $\rho_x^j$. We define the equivalent quantities $\boldsymbol{k_y^j}$ and $\rho_y^j$ for values normalized with respect to $y^j$. Then let 
\begin{align}
\begin{aligned}\label{eq:xyceiling}
    &\beta_x^j \geq x^j \rho_x^j = \sum_{l=0}^{\lfloor \log k_{\max} \rfloor} 2^l \chi_l^j \rho_x^j, \ \beta_x^j \in \mathbb{Z}, \\
    &\beta_y^j \geq y^j \rho_y^j = \sum_{l=0}^{\lfloor \log k_{\max} \rfloor} 2^l \psi_l^j \rho_y^j, \ \beta_y^j \in \mathbb{Z},
\end{aligned}
\end{align}
which has the familiar form of the sum of bilinear terms with one binary factor.

Let $\delta_{xl}^j \triangleq \chi_l^j \rho_x^j$ and similarly $\delta_{yl}^j \triangleq \psi_l^j \rho_y^j$. Applying the envelope technique,
\begin{align}
\begin{aligned}
    &\delta_{xl}^j \geq 0, \ \delta_{yl}^j \geq 0, \\
    &\delta_{xl}^j \leq \rho_x^j, \ \delta_{yl}^j \leq \rho_y^j, \\
    &\delta_{xl}^j \leq \chi_l^j, \ \delta_{yl}^j \leq \psi_l^j, \\
    &\delta_{yl}^j \geq \rho_x^j - (1 - \chi_l^j), \ \delta_{yl}^j \geq \rho_y^j - (1 - \psi_l^j),
\end{aligned}
\end{align}
and the bilinear terms in~\eqref{eq:xyceiling} can be replaced with $\delta_{xl}^j$ and $\delta_{yl}^j$ respectively.

Next define real numbers $\alpha_x^j$ and $\alpha_y^j$ such that 
\begin{align}
\begin{aligned}\label{eq:xyratio}
    &\beta_x^j = x^j \alpha_x^j = \sum_{l=0}^{\lfloor \log k_{\max} \rfloor} 2^l \chi_l^j \alpha_x^j, \\
    &\beta_y^j = y^j \alpha_y^j = \sum_{l=0}^{\lfloor \log k_{\max} \rfloor} 2^l \psi_l^j \alpha_y^j, \\
\end{aligned}
\end{align}
along with values $\omega_{xl}^j \triangleq \chi_l^j \alpha_x^j$ and $\omega_{yl}^j \triangleq \psi_l^j \alpha_y^j$. This is equivalent to 
\begin{align}
\begin{aligned}
    &\omega_{xl}^j \geq 0, \ \omega_{yl}^j \geq 0, \\
    &\omega_{xl}^j \leq \alpha_x^j, \ \omega_{yl}^j \leq \alpha_y^j, \\
    &\omega_{xl}^j \leq \chi_l^j, \ \omega_{yl}^j \leq \psi_l^j, \\
    &\omega_{xl}^j \geq \alpha_x^j - (1 - \chi_l^j), \ \omega_{yl}^j \geq \alpha_y^j - (1 - \psi_l^j),
\end{aligned}
\end{align}
and once again the bilinear terms in~\eqref{eq:xyratio} can be replaced with these values.

Recall that the $S_{xy}$ schedulability constraint~\eqref{eq:xyschedulability} is 
\begin{equation}
    \frac{\lceil x \rho(\boldsymbol{k'_x}) \rceil}{x} + \frac{\lceil y \rho(\boldsymbol{k'_y}) \rceil}{y} \leq 1.
\end{equation}
Examining the variables defined above, one can see that $\beta_x^j$ and $\beta_y^j$ are equivalent to the numerators on the left hand side, and $\alpha_x^j$ and $\alpha_y^j$ are equivalent to the full terms on the left hand side. Therefore, if $\alpha_x^j + \alpha_y^j \leq 1$, then the vector is schedulable. We only require one iteration of $IS$ to be schedulable, and we introduce the binary vector $\boldsymbol{\theta}$ to indicate whether each iteration is schedulable or not. This can be expressed as
\begin{align}
\begin{aligned}\label{eq:iterschedulability}
    &(\alpha_x^j + \alpha_y^j) \theta_j \leq 1, \ \forall \ 0 \leq j < M, \\
    &\boldsymbol{\theta}^T \mathbf{1}_M \geq 1. 
\end{aligned}
\end{align}
By the usual process, we define $\iota_x^j \triangleq \alpha_x^j \theta_j$ and $\iota_y^j \triangleq \alpha_y^j \theta_j$, or equivalently,
\begin{align}
\begin{aligned}
    &\iota_x^j \geq 0, \ \iota_y^j \geq 0, \\
    &\iota_x^j \leq \alpha_x^j, \ \iota_y^j \leq \alpha_y^j, \\
    &\iota_x^j \leq \theta_j, \ \iota_y^j \leq \theta_j, \\
    &\iota_x^j \geq \alpha_x^j - (1 - \theta_j), \ \iota_y^j \geq \alpha_y^j - (1 - \theta_j),
\end{aligned}
\end{align}
and replace the bilinear terms in~\eqref{eq:iterschedulability}.

The next step is to determine which values $k_i^j$ are normalized with respect to $x^j$ and $y^j$, and to enforce that each is a power of $2$ multiple of the corresponding value. Let $\nu_i^j = 1$ if $k_i^j$ is normalized with respect to $x^j$ and $0$ otherwise. Then 
\begin{equation}\label{eq:originalk}
    k_i^j = \sum_{n=0}^{\lfloor \log k_{\max} \rfloor} 2^n \eta_{in}^j (\nu_i^j x_i^j + (1-\nu_i^j) y_i^j),
\end{equation}
where $\eta_{in}^j = 1$ if the power of $2$ factor is $2^n$. First define $p_{il}^j \triangleq \nu_i \chi_{il}^j$ and $q_{il}^j \triangleq \nu_i \psi_{il}^j$, or equivalently
\begin{align}
\begin{aligned}
    &p_{il}^j \geq 0, \ q_{il}^j \geq 0, \\
    &p_{il}^j \leq \nu_i, \ q_{il}^j \leq \nu_i, \\
    &p_{il}^j \leq \chi_{il}^j, \ q_{il}^j \leq \psi_{il}^j, \\
    &p_{il}^j \geq \nu_i + \chi_{il}^j - 1, \ q_{il}^j \geq \nu_i + \psi_{il}^j - 1.
\end{aligned}
\end{align}
Then~\eqref{eq:originalk} becomes
\begin{align} 
\begin{aligned}\label{eq:kexpansion}
    k_i^j &= \sum_{n=0}^{\lfloor \log k_{\max} \rfloor} 2^n \eta_{in}^j \Big( \sum_{l=0}^{\lfloor \log k_{\max} \rfloor} 2^l (p_{il}^j - q_{il}^j + \psi_{il}^j) \Big) \\
    &= \sum_{n=0}^{\lfloor \log k_{\max} \rfloor} 2^n \eta_{in}^j \Big( \sum_{l=0}^{\lfloor \log k_{\max} \rfloor} 2^l h_{il}^j \Big) \\
    &= \sum_{l=0}^{\lfloor \log k_{\max} \rfloor} \sum_{n=0}^{\lfloor \log k_{\max} \rfloor} 2^{l+n} \eta_{in}^j h_{il}^j,
\end{aligned}
\end{align}
where $h_{il}^j \triangleq p_{il}^j - q_{il}^j + \psi_{il}^j$. 

Let $\phi_{iln}^j \triangleq \eta_{in}^j h_{il}^j$, which is again equivalent to 
\begin{align}
\begin{aligned}
    &\phi_{iln}^j \geq 0, \\
    &\phi_{iln}^j \leq \eta_{in}^j, \\
    &\phi_{iln}^j \leq h_{il}^j, \\
    &\phi_{iln}^j \geq \eta_{in}^j + h_{il}^j - 1.
\end{aligned}
\end{align}
Replacing the bilinear term in~\eqref{eq:kexpansion} with $\phi_{iln}^j$, we finally arrive at a linear expression for $k_i^j$. To ensure that $\eta_{in}^j = 1$ for only one value of $n$, we also have
\begin{equation}
    \boldsymbol{\eta_i^j}^T \mathbf{1}_{\lfloor \log k_{\max} \rfloor} = 1.
\end{equation}

Finally, we address the last step, which is computing the densities $\rho_x^j$ and $\rho_y^j$. First define $\mu_i^j$ such that $\mu_i^j k_i^j = 1$, which is equivalent to
\begin{equation}\label{eq:mukappa}
    \sum_{l=0}^{\lfloor \log k_{\max} \rfloor} \mu_i^j \kappa_{il}^j = 1.
\end{equation}
Next define $u_{il}^j \triangleq \mu_i^j \kappa_{il}^j$ and $v_i^j \triangleq \mu_i^j \nu_i^j$, or equivalently,
\begin{align}
\begin{aligned}
    &u_{il}^j \geq 0, \ v_i^j \geq 0, \\
    &u_{il}^j \leq \mu_i^j, \ v_i^j \leq \mu_i^j, \\
    &u_{il}^j \leq \kappa_{il}^j, \ v_i^j \leq \nu_i^j, \\
    &u_{il}^j \geq \mu_i^j - (1 - \kappa_{il}^j), \ v_i^j \geq \mu_i^j - (1 - \nu_i^j).
\end{aligned}
\end{align}
Then we can replace the bilinear term in~\eqref{eq:mukappa} with $u_{il}^j$, and observe that
\begin{align}
\begin{aligned}
    &\rho_x^j = \sum_{i=0}^{M-1} \nu_i^j \mu_i^j = \sum_{i=0}^{M-1} v_i^j, \\
    &\rho_y^j = \sum_{i=0}^{M-1} (1 - \nu_i^j) \mu_i^j = \sum_{i=0}^{M-1} \mu_i^j - v_i^j,
\end{aligned}
\end{align}
which completes the process.

\bibliographystyle{IEEEtran}
\bibliography{tree_num}

\end{document}